\newcommand{\bfi}{\bfseries\itshape}
\def\thefigure{\thesection.\@arabic\c@figure}
\def\fps@figure{h, t}
\def\thetable{\thesection.\@arabic\c@table}
\def\fps@table{h, t}
\newcommand{\x}{\xi}
\newcommand{\e}{\eta}
\newcommand{\de}{\delta}
\begin{document}

\newtheorem{theorem}{Theorem}[section]
\newtheorem{definition}[theorem]{Definition}
\newtheorem{lemma}[theorem]{Lemma}
\newtheorem{remark}[theorem]{Remark}
\newtheorem{proposition}[theorem]{Proposition}
\newtheorem{corollary}[theorem]{Corollary}
\newtheorem{example}[theorem]{Example}

\setlength\parindent{0pt}

%%%%%%%%%%%%%%%%%%%%%%%%%%%%%%%%%%%%%%%%%%%%%%%%%%%%%%%%%%%%%%%%%%%%%%%%%%%%%%%
%%%%%%%%

%%%%%%%%%%%%%%%%%%%%%%%%%%%%%%%%%%%%%%%%%%%%%%%%%%%%%%%%%%%%%%%%%%%%%%%%%%%%%%%
%%%%%%%%

\title{Toda lattice G-Strands}

\author{
Darryl D. Holm$^{1}$ and Alexander M. Lucas$^{1}$
}
{\addtocounter{footnote}{1} 
\footnotetext{Department of Mathematics, Imperial College, London SW7 2AZ, UK. \\
Email: \texttt{d.holm@imperial.ac.uk, alexander.lucas09@imperial.ac.uk}
}%
%
%\date{\footnotesize This version \today}
\date{}

\maketitle

\makeatother
%\begin{center} DRAFT \end{center}
\maketitle

%|||-------------------text width----------------------|||

%\noindent \textbf{AMS Classification:} 
%
%\noindent \textbf{Keywords:} 

\begin{abstract} 
Hamilton's principle is used to extend for the Toda lattice ODEs to systems of PDEs called the Toda lattice strand equations (T-Strands). 
The T-Strands in the $n$-particle Toda case comprise $4n-2$ quadratically  nonlinear PDEs in one space and one time variable. T-Strands form a symmetric hyperbolic Lie-Poisson Hamiltonian system of quadratically nonlinear PDEs with constant characteristic velocities.  The travelling wave solutions for the two-particle T-Strand equations are solved geometrically, and their Lax pair is given to show how nonlinearity affects the solution. The three-particle T-Strands equations are also derived from Hamilton's principle. For both the two-particle and three-particle T-Strand PDEs the determining conditions for the existence of a quadratic zero-curvature relation (ZCR) exactly cancel the nonlinear terms in the PDEs.  Thus, the two-particle and three-particle T-Strand PDEs do not pass the ZCR test for integrability.
\end{abstract} 

%\begin{multicols}{2}
%{
%}
%\end{multicols}

\vspace{10 mm}

\tableofcontents

\section{Introduction and plan of the paper}\label{intro-sec}

The concept of G-Strand PDE dynamics applies to motion of molecular strands, or filaments, such as polymers and other long hydrocarbon molecules such as DNA in which a molecular interaction occurs among neighboring points along the strand \cite{ElGBHoPuRa2010}. In what follows, the molecular interactions at each point $s$ along the strand are \textit{simplified} by replacing the physical interaction model with an $n$-particle Toda lattice, coupled to its neighbors along the strand by gradient terms in $\partial_s$. This simplification of the molecular strand interaction yields the Toda lattice G-Strand, abbreviated as \textit{T-Strand}. Our approach uses the Euler-Poincar\'e (EP) formulation of Hamilton's principle for the Toda lattice ODEs \cite{HoMaRa1998}. In the EP formulation, the Toda lattice dynamical system is recognised as coadjoint motion on the dual of a certain matrix Lie algebra. In the EP setting, we pass from the Toda lattice ODEs to a system of nonlinear PDEs in one space and one time dimension. 

\paragraph{Lax pair representation}
The Toda lattice ODEs are well known to admit a matrix commutator representation due to Flaschka \cite{Fl1974a,Fl1974b}
\begin{align}
\frac{dL}{dt} =  [L,\,M] = LM - ML\,,
\label{LPRep}
\end{align}
where $L=L^T$ is symmetric and $M=-M^T$ is antisymmetric. The matrices $L$ and $M$ in \eqref{LPRep} are said to form a \textit{Lax pair}, after Lax's famous solution of the Korteweg--de Vries (KdV) equation \cite{Lax68} obtained by writing KdV in the form \eqref{LPRep}.
\paragraph{Zero curvature representation (ZCR)}
Soon after Lax wrote the KdV equation in commutator form \eqref{LPRep} in \cite{Lax68}, people realized that the Lax-pair representation of integrable systems is equivalent to a zero curvature representation (a zero commutator of two operators), as in 
\begin{equation} 
\partial_t L - \partial_s M = [L,M] 
\label{ZCR-eqn-intro(1)}
\,.\end{equation}
In equation \eqref{ZCR-eqn-intro(1)}, one sees two separate Lax pairs (one in $t$ and another in $s$) whose $(s,t)$ dependence on both independent variables is linked together by the conditions imposed by the ZCR. 

\paragraph{Summary of the paper}
In this paper, we first formulate the Toda ODEs as an Euler-Poincar\'e system using Hamilton's principle on a semidirect-product Lie group of scalings and translations $S\circledS T$.  Next, we derive the T-Strand PDEs by using the Euler-Poincar\'e theory to include another independent variable and thus extend from Toda ODEs in time $t$ to T-Strand PDEs in space-time $(s,t)$. This approach was previously used successfully to formulate integrable PDEs by extending Euler's equations for the rigid body on $SO(3)$ and the Bloch-Iserles equations on $Sp(2)$ from ODEs to PDEs \cite{HoIvPe2012}. 

Next, we test for integrability of the T-Strand PDEs by formulating their zero curvature representation (ZCR). It turns out that the T-Strand systems for two-particle and three-particle T-Strands admit a ZCR only on a submanifold of the full solution space defined by certain linear relationships for which their (quadratic) nonlinear terms cancel. The role of the nonlinear terms for travelling wave ODEs is discussed for the two-particle T-Strands and the Lax pair is given for these travelling waves. 

The main results and plan of the paper are as follows. 
\begin{description}
\item
Section \ref{Toda-sec} briefly introduces the Toda lattice. 

\item
Section \ref{EPToda-sec} begins by formulating Toda lattice dynamics in terms of Hamilton's principle by using the Euler-Poincar\'e (EP) theory for Lie group invariant Lagrangians \cite{Po1901}. This section places Toda lattice dynamics into the EP framework and introduces G-Strands as maps $g(t,s): \mathbb{R}\times\mathbb{R}\to G$ of space-time $\mathbb{R}\times\mathbb{R}$ into a Lie group $G$, such that the maps follow from Hamilton's principle for a $G$-invariant Lagrangian. 

\item
Section \ref{2PT-strand-sec} develops the Lie-Poisson Hamiltonian formulation of the two-particle and three-particle Toda lattice G-Strands. This is the \emph{T-Strand} dynamics that we study. Section \ref{2PT-strand-sec} demonstrates that the two-particle T-Strand is a $4n-2=6$ dimensional symmetric hyperbolic system whose characteristic speeds $c$ are given by $c^2=1$.  All calculations are performed explicitly using elementary matrix methods. 

\item
Section \ref{TWs-sec} discusses the travelling wave solutions of the two-particle T-Strand and derives an explicit solution.  These solutions may be visualized as intersections in $\mathbb{R}^3$ of level sets of two families of quadratic conserved quantities. In particular, these level sets in $\mathbb{R}^3$ comprise two families of orthogonally aligned off-set elliptical cylinders. A typical solution following one of these intersections is shown in Figure \ref{fig1}.

\item Section \ref{ZCR-test} applies the zero curvature representation (ZCR) integrability test to the six-dimensional two-particle T-Strand equations and shows that the determining relations among the variables for which these equations admit a ZCR imply that the nonlinear terms in the T-Strand equations exactly vanish. 

\item Section \ref{3PT-Strand-sec} discusses the ten-dimensional three-particle T-Strands $(4n-2=10)$ and shows the parallel construction of the equations and their ZCR integrability test, which has the same exact cancellation of the nonlinear terms as found for the two-particle T-Strands.
 
 \item
 Section \ref{conclude-sec} concludes with a summary of the main paper's results. One of the main results is the cancellation of the nonlinear terms in the T-Strand equations by the determining relations for them to admit a ZCR. This was not entirely expected, because the present EP approach had previously succeeded in constructing integrable 1+1 PDE generalizations of the integrable ODEs for the Lie algebras $\mathfrak{so}(3)$ (rigid body) and $\mathfrak{sp}(2)$ (Bloch-Iserles equation)  \cite{HoIvPe2012}. For the Toda lattice T-Strands the EP approach produces systems of 1+1 PDEs for which the (linear) determining equations that provide the submanifold on which the system admits a ZCR also happen to cancel the (quadratic) nonlinear terms in these PDEs. This result raises the following question.  What is the mathematical reason why this exact cancellation occurs? One might imagine that it occurs for a Lie-algebraic reason. The EP approach produces equations whose solutions lie on coadjoint orbits of the symmetry group of their Lagrangian in Hamilton's principle. Perhaps the exact cancellation of the nonlinearity in the EP and compatibility equations by their ZCR determining equations has something to do with consistency of the Lie algebraic structure with transforming between ad and ad$^*$ operations of the Lie the symmetry group of the Lagrangian.  An answer for this intriguing question will be conjectured in section \ref{conclude-sec}, but it will not be fully answered, and thus will remain open as a topic for future discussion. 
 \end{description}

\begin{figure}[H]
     \centering
     \includegraphics[width=0.45\textwidth]{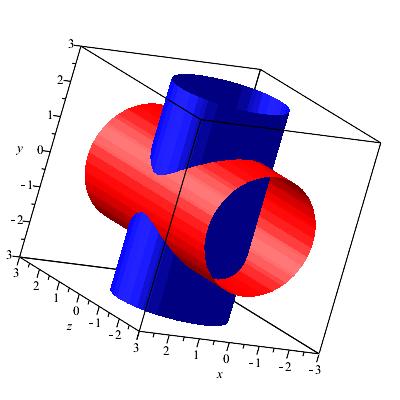}
	\caption{Here is one example of a travelling wave solution for the two-particle T-Strand studied in section \ref{TWs-sec}. These solutions can be represented in $\mathbb{R}^3$ as the intersections of orthogonal  elliptic cylinders. Two solutions are shown, corresponding to periodic motions around the two `pringle' shaped intersections of the elliptic cylinders.}
\label{fig1}
 \end{figure}

%%%%%%%%%%%%%%%%%%%%%%%%%%%%%%%%%%%%%%
\section{The Toda Lattice}\label{Toda-sec}
\subsection{Introduction}
The Toda lattice is a system of $n$ unit masses, connected by nonlinear springs governed by an exponential restoring force \cite{Toda1970}. The canonical equations of motion are derivable from the Hamiltonian 
\begin{align}
H = \sum_{k=1}^n \frac12 p_k^2 + e^{-(q_k-q_{k-1})},
\label{Toda-Ham}
\end{align}
in which $q_k$ is the displacement of the $k$th mass from equilibrium, and $p_k$ is the canonically conjugate momentum. 
This is a special case of the Fermi-Pasta-Ulam (FPU) lattice \cite{FPU1955}, whose Hamiltonian is
\begin{align}
H = \sum_{k=1}^n \frac12 p_k^2 + V(q_k-q_{k-1})
\,.
\label{FPU-Ham}
\end{align}
FPU had focused mainly on the effects of cubic and quartic higher-order terms in the potential $V$, whereas Toda introduced an exponential nonlinearity.

\subsection{Flaschka's change of variables}
\begin{definition}[Flaschka variables]\rm $\,$\\
Flaschka \cite{Fl1974a,Fl1974b} introduced the following $2n-1$ new variables for the $n$-particle Toda Lattice, 
\begin{align}
	a_k := \frac{1}{2}e^{-(q_k - q_{k-1})/2} 
	\quad\hbox{and}\quad
	b_k := -\,\frac{1}{2}p_k .  
\label{Flaschka-var}
\end{align}
\end{definition}

One may check that if $(q_k,p_k)$ satisfy the canonical equations derived from the Hamiltonian in \eqref{Toda-Ham} then $(a_k,b_k)$ satisfy
\begin{align}
\begin{split}
	\dot{a}_k &=  a_k(b_{k+1}-b_k),
	\quad k=1,\dots n-1
	\\
	\dot{b}_1 &=  2a_1^2
	\\
	\dot{b}_k &= 2(a_k^2-a_{k-1}^2),
	\quad k=2,\dots n-1,  
	\\
	\dot{b}_n &= -2a_{n-1}^2
\end{split}
\label{Flaschka-eqns}
\end{align}
with boundary conditions $a_0=0=a_n$. 

The $n$-particle Toda system \eqref{Flaschka-eqns} has a {\bfi Lax pair representation} \cite{Fl1974a,Fl1974b}
\begin{align}
	\frac{dL}{dt} = [M,L] = ML-LM , 
\label{Lax-pair}
\end{align}
for the symmetric tridiagonal matrix $L$ and the antisymmetric tridiagonal matrix $B$ given by
\begin{align}\footnotesize
	L = \begin{pmatrix}
	b_1 & a_1 &   0 & \dots & 0 \\
	a_1 & b_2 & a_2& \dots & 0 \\
	       &        & \ddots        &        & \\
	       &        &        &    b_{n-1}    & a_{n-1} \\
	 0    &        &        &    a_{n-1}    & b_n \\
	\end{pmatrix} 
	\hbox{ and }
	M = L_+ - L_- =
	\begin{pmatrix}
	0 & a_1 &   0 & \dots & 0 \\
	-a_1 & 0 & a_2& \dots & 0 \\
	       &        & \ddots        &        & \\
	       &        &        &    0    & a_{n-1} \\
	 0    &        &        &    -a_{n-1}    & 0 \\
	\end{pmatrix} 
.\label{LB-defs}
\end{align}
\normalsize
Here the subscript $(+)$ in $L_+$ means the upper triangular part of $L$, and $(-)$ in $L_-$ means the lower triangular part. In this case, the Lax pair representation \eqref{LB-defs} means that 
\[
\frac{d}{dt}\big(O(t)^{-1}L(t)O(t)\big) = 0
\,,
\]
for an orthogonal matrix $O(t)\in O(n)$, upon identifying $M=O(t)^{-1}\frac{dO(t)}{dt}$. 

Thus, $O(t)^{-1}L(t))O(t)=L(0)$, so the $n$ eigenvalues of $L(t)$, which are real and distinct,  are preserved along the Toda flow. This is enough to show that the Toda system is an integrable Hamiltonian system \cite{Fl1974a,Fl1974b}.

\paragraph{Other references}
For a broad outline of the mathematics of the Toda lattice and references to some of its extensive literature in the context of integrable Hamiltonian systems, see \cite{AdvMoVa2004}. For a summary introduction to the fundamental papers in the field and a clear discussion of the Toda lattice from the viewpoint of geometric mechanics, see \cite{BlBrRa1990}. The first work on solving the Toda lattice equations by using their integrability is due to Moser \cite{Mo1976}.

\section{Euler--Poincar\'e formulation of Toda lattice dynamics}\label{EPToda-sec}

This section begins by formulating Toda lattice dynamics in terms of Hamilton's principle. For this purpose, we will apply the Euler-Poincar\'e theory \cite{Po1901}. We will then formulate G-Strand dynamics for a strand of interacting Toda lattices depending on space and time.

\subsection{Euler--Poincar\'e Theory}
In the notation for the adjoint (${\rm ad}$) and coadjoint (${\rm ad}^*)$ actions of Lie algebras on themselves and on their duals, Hamilton's principle (that the equations of motion arise from stationarity of the action) for Lagrangians defined on Lie algebras may be expressed as follows. This is the Euler--Poincar\'e theorem \cite{Po1901}. 
\index{Poincar\'e!1901 paper}

\begin{theorem}[Euler--Poincar\'e theorem \cite{Po1901}]
\label{HamPrincLieAlg}$\,$

Stationarity
\begin{equation}\label{stationarity-cond}
\delta S(\xi)=\delta \int^b_a l (\xi)\, d t = 0
\end{equation}
of an action 
\[
S(\xi)=\int^b_a l (\xi)\, d t
\,,
\]
whose Lagrangian is defined on the (left-invariant) Lie algebra $\mathfrak{g}$ of a Lie group $G$ by $l (\xi):\,\mathfrak{g}\mapsto\mathbb{R}$, 
yields the {\bfi Euler--Poincar\'e equation} on $\mathfrak{g}^*$,
\begin{equation}\label{EP-eqns}
\frac{d}{dt} \frac{\delta l}{\delta \xi} = {\rm ad}^*_\xi
\frac{\delta l}{\delta \xi}\,,
\end{equation}
for variations of the left-invariant Lie algebra element \[\xi=g^{-1}\dot{g}(t)\in\mathfrak{g}\] that are restricted to
the form
\begin{equation}
\delta \xi = \dot \eta + {\rm ad}_\xi \, \eta \,,
\label{var-cond}
\end{equation}
in which $\eta(t)\in\mathfrak{g}$ is a curve in the Lie algebra $\mathfrak{g}$ that
vanishes at the endpoints in time. 
\end{theorem}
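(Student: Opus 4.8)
The plan is to compute the variation of the action directly and rewrite it in a form from which the Euler–Poincaré equation can be read off by the fundamental lemma of the calculus of variations. First I would write $\delta S(\xi) = \int_a^b \left\langle \frac{\delta l}{\delta \xi},\, \delta\xi \right\rangle dt$, using the natural pairing $\langle\,\cdot\,,\,\cdot\,\rangle:\mathfrak{g}^*\times\mathfrak{g}\to\mathbb{R}$. The key input is the constrained form of the variations, $\delta\xi = \dot\eta + \mathrm{ad}_\xi\,\eta$, so that
\begin{equation}
\delta S(\xi) = \int_a^b \left\langle \frac{\delta l}{\delta \xi},\, \dot\eta + \mathrm{ad}_\xi\,\eta \right\rangle dt
\,.
\end{equation}
Then I would integrate the first term by parts in $t$, using that $\eta$ vanishes at the endpoints $t=a,b$ to kill the boundary term, and use the definition of the coadjoint action $\mathrm{ad}^*$ as the adjoint of $\mathrm{ad}$ with respect to the pairing, i.e. $\langle \mu,\, \mathrm{ad}_\xi\,\eta\rangle = \langle \mathrm{ad}^*_\xi\,\mu,\, \eta\rangle$, to rewrite the second term. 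This gives
\begin{equation}
\delta S(\xi) = \int_a^b \left\langle -\frac{d}{dt}\frac{\delta l}{\delta \xi} + \mathrm{ad}^*_\xi \frac{\delta l}{\delta \xi},\, \eta \right\rangle dt
\,.
\end{equation}
Since $\eta(t)$ is otherwise arbitrary in $\mathfrak{g}$ on the open interval, stationarity $\delta S=0$ for all such $\eta$ forces the bracketed $\mathfrak{g}^*$-valued expression to vanish identically, which is exactly equation \eqref{EP-eqns}.

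The remaining point, which is really the substantive content rather than the above formal manipulation, is to justify that the admissible variations of $\xi = g^{-1}\dot g$ are precisely those of the form \eqref{var-cond}. Here I would start from an arbitrary variation $g_\varepsilon(t)$ of the curve $g(t)$ with fixed endpoints, set $\eta := g^{-1}\delta g \in\mathfrak{g}$ (where $\delta g = \frac{d}{d\varepsilon}\big|_{\varepsilon=0} g_\varepsilon$), and compute $\delta\xi = \delta(g^{-1}\dot g)$ by differentiating in $\varepsilon$ and $t$ and using equality of mixed partials. The computation of $\delta(g^{-1}\dot g)$ and $\partial_t(g^{-1}\delta g)$ both produce terms involving $g^{-1}\dot g$, $g^{-1}\delta g$, and their derivatives; subtracting shows $\delta\xi - \dot\eta = \xi\eta - \eta\xi = \mathrm{ad}_\xi\,\eta$ (interpreting the bracket via the adjoint action on $\mathfrak{g}$, with the appropriate sign convention for left-invariant velocities). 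Conversely, given any $\eta(t)$ vanishing at the endpoints, one can realize it by a suitable variation, so the constraint \eqref{var-cond} is exactly the right class.

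The main obstacle — or at least the only place where care is genuinely needed — is bookkeeping of signs and conventions: the sign in $\xi = g^{-1}\dot g$ versus $\dot g g^{-1}$ (left- versus right-invariant), the corresponding sign in $\mathrm{ad}_\xi\,\eta = [\xi,\eta]$ or $[\eta,\xi]$, and the fact that for matrix groups the formula $\delta(g^{-1}\dot g)$ must be carried out without assuming commutativity. Everything else — the integration by parts, the duality defining $\mathrm{ad}^*$, and the final application of the fundamental lemma — is routine. I would therefore present the variation-of-$\xi$ computation in full and treat the action-variation step as a short display, since the theorem as stated is classical and the interest of the paper lies in its later application to the Toda lattice.
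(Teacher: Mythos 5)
Your proposal is correct and follows exactly the standard variational argument: substitute the constrained variations $\delta\xi=\dot\eta+{\rm ad}_\xi\eta$, integrate by parts using the endpoint conditions, dualize via $\langle\mu,{\rm ad}_\xi\eta\rangle=\langle{\rm ad}^*_\xi\mu,\eta\rangle$, and invoke the fundamental lemma, together with the mixed-partials computation showing $\delta(g^{-1}\dot g)-\partial_t(g^{-1}\delta g)={\rm ad}_\xi\eta$. The paper states this classical theorem without a standalone proof (citing Poincar\'e), but the computation it carries out for the G-Strand version in the section on Hamilton's principle is precisely the same argument, so your route coincides with the paper's.
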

The Euler--Poincar\'e theorem provides a useful and convenient means of reduction by symmetry of Hamilton's principles for Lagrangians that are invariant under a non-Abelian Lie group. 
For recent discussions and other applications of this theorem, see \cite{HoMaRa1998,HoScSt2009}.

\subsection{Hamilton's principle for the $n$-particle Toda lattice}
The $n$-particle Toda lattice dynamics arises from a Lagrangian defined on $\mathbb{R}^{2n-1}$ that is invariant under a certain Lie group $G$ of scaling $S$ and shear transformations $T$ that has $2n-1$-parameters, in the set $\{\alpha_1,\dots,\alpha_{n-1}, \beta_1,\dots,\beta_n\}$. We denote the Flaschka variables as a row vector in $\mathbb{R}^{2n-1}$, namely  $(a_1,\dots,a_{n-1},b_1,\dots,b_n)$. The Lie group action we consider here is the left action of the semidirect product group $G=S\circledS T$ on $\mathbb{R}^{2n-1}$, namely,
\begin{align} 
\begin{split} 
&S:\quad(a_1,a_2,\dots,a_{n-1})
\to
(e^{\beta_1-\beta_2}a_1,e^{\beta_2-\beta_3}a_2,\dots,e^{\beta_{n-1}-\beta_n}a_{n-1})\,,
\\
&T:\quad(b_1,b_2,\dots,b_n)
\to
(b_1-\alpha_1a_1,b_2+\alpha_1a_1-\alpha_2a_2,\dots,
b_n + \alpha_{n-1}a_{n-1})\,.
\end{split}
\label{Toda-sym-npart}
\end{align}
We will show that the Toda lattice equations in Flaschka variables comprise an Euler-Poincar\'e equation of the form \eqref{EP-eqns} for the semidirect product action in \eqref{Toda-sym-npart} of $G=S\circledS T$ on $\mathbb{R}^{2n-1}$.

\subsection{Three particles}\label{3PHamForm-sec}
The case of $n=3$ is sufficient to illustrate the general idea. In that case, $2n-1=5$ and the Lie group action $G\times \mathbb{R}^5 \to \mathbb{R}^5$ we consider is,
\begin{align} 
\begin{split} 
(a_1,a_2)
&\to
(e^{\beta_1-\beta_2}a_1,e^{\beta_2-\beta_3}a_2)
\\
(b_1,b_2, b_3)
&\to
(b_1-\alpha_1a_1,b_2+\alpha_1a_1-\alpha_2a_2,
b_3 + \alpha_{2}a_{2})
\,.
\end{split}
\label{Toda-sym-3part}
\end{align}
This group representation could be reduced to four dimensions. However, it is convenient to write the Lie group and its Lie algebra in terms of $5=2n-1$ parameters, in order to match the number of degrees of freedom in the Toda equations.

The group element $g_t\in G=S\circledS T$ and its (left) Lie algebra $\xi\in\mathfrak{g}$ for this group action may be represented as a subgroup of the lower triangular matrices. Namely, 
\begin{align} 
g_t
=
\begin{pmatrix}
e^{\beta_1-\beta_2}  & 0   & 0  & 0  & 0
\\
0 & e^{\beta_2-\beta_3}   & 0  & 0  & 0 
\\
- \alpha_1  & 0    & 1    & 0  & 0
\\
\alpha_1  & - \alpha_2  & 0  & 1  & 0
\\
0  & \alpha_2  & 0         & 0  & 1
\end{pmatrix}
\quad\hbox{and}\quad
\xi:= g_t^{-1}\dot{g}_t = 
\begin{pmatrix}
\xi_3-\xi_4  & 0   & 0  & 0  & 0
\\
0 & \xi_4-\xi_5   & 0  & 0  & 0 
\\
- \xi_1  & 0    & 0    & 0  & 0
\\
\xi_1  & - \xi_2  & 0  & 0  & 0
\\
0         & \xi_2  & 0   & 0  & 0
\end{pmatrix}
\label{triang-matrices}
\end{align}
The nonvanishing commutators among the matrix basis elements in this representation are
\begin{align*}
	[\hat{e}_1,\hat{e}_3] = \hat{e}_1 = [\hat{e}_4,\hat{e}_1] \,, \qquad
	[\hat{e}_2,\hat{e}_4] = \hat{e}_2 = [\hat{e}_5,\hat{e}_2]\,,
	\quad\hbox{where}\quad
	\xi=\sum_{j=1}^5\xi_j\hat{e}_j\,.
\end{align*}
This is the expected form of the Lie algebra commutator for a semidirect product. Remarkably, although both groups $S$ and $T$ are Abelian, their semidirect product $G=S\circledS T$ is not Abelian. 

The Lie algebra commutator ${\rm ad}_\xi\eta = [\xi,\eta]$ is given for two Lie algebra elements $\xi$ and $\eta$ by 
\begin{align}
{\rm ad}_\xi\eta = 
\begin{pmatrix}
0  & 0   & 0  & 0  & 0
\\
0   & 0  & 0  & 0  & 0
\\
- \xi_1(\eta_3-\eta_4) 
+ \eta_1 (\xi_3-\xi_4)          & 0    & 0    & 0  & 0
\\
 \xi_1(\eta_3-\eta_4) 
- \eta_1 (\xi_3-\xi_4) & 
- \xi_2(\eta_4-\eta_5) 
+ \eta_2(\xi_4-\xi_5) &  0  & 0  & 0
\\
0         & \xi_2(\eta_4-\eta_5) 
- \eta_2(\xi_4-\xi_5) & 0   & 0  & 0
\end{pmatrix}
\label{ad-commut-3part}
\end{align}

An element of the dual Lie algebra $\mu\in\mathfrak{g}^*$ is represented by the transpose of its corresponding Lie algebra matrix in \eqref{triang-matrices},
\begin{align*}
\mu = 
\begin{pmatrix}
\mu_3 - \mu_4  & 0   & -\mu_1  & \mu_1  & 0
\\
0   & \mu_4-\mu_5   & 0  & -\mu_2  & \mu_2 
\\
0   & 0    & 0    & 0  & 0
\\
0    &  0  & 0  & 0  & 0
\\
0     &  0  & 0   & 0  & 0
\end{pmatrix}
.\end{align*}
This is also the matrix dual, so it provides a non-degenerate pairing $\langle\,\cdot,\, \,\cdot, \rangle: \mathfrak{g}\times \mathfrak{g}^*\to \mathbb{R}$ via the trace pairing of matrices. 

We compute the ad$^*$ operation by taking the dual of ad by the following computation 
\begin{align*}
\langle \mu,\, {\rm ad}_\xi\eta \rangle 
&=
\frac12 {\rm trace} (\mu \, {\rm ad}_\xi\eta)
\\
&=
\mu_1 \, [\xi_1(\eta_3-\eta_4) - \eta_1 (\xi_3-\xi_4)]
+ \mu_2 \, [ \xi_2(\eta_4-\eta_5) - \eta_2(\xi_4-\xi_5) ]
\\
&=
-\, \mu_1(\xi_3-\xi_4)\,\eta_1 
- \, \mu_2(\xi_4-\xi_5)\,\eta_2 
+ \mu_1\eta_1 \,\eta_3
- \, (\mu_1\xi_1-\mu_2\xi_2)\,\eta_4 
-  \, \mu_2\xi_2\,\eta_5
\\
&=
\Big(\!\!- \mu_1(\xi_3-\xi_4)\,, 
-  \mu_2(\xi_4-\xi_5)\,,\,
  \mu_1\xi_1\,,
\mu_2\xi_2 -\mu_1\xi_1\,,
-   \mu_2\xi_2
\Big)
\!\cdot\!
\Big(
\eta_1,\,\eta_2 ,\, \eta_3 ,\, \eta_4 ,\, \eta_5
\Big)^T
\\
&=
\frac12 {\rm trace}\, ({\rm ad}^*_\xi \mu \, \eta)
\\
&=:
\langle {\rm ad}^*_\xi\mu,\, \eta \rangle 
\end{align*}
In matrix form, the formula for ${\rm ad}^*_\xi\mu$ is
\begin{align*}
{\rm ad}^*_\xi\mu = 
\begin{pmatrix}
2\mu_1\xi_1 - \mu_2\xi_2  & 0   & - \mu_1(\xi_4-\xi_3)  &  \mu_1(\xi_4-\xi_3)  & 0
\\
0   & 2\mu_2\xi_2 -\mu_1\xi_1  & 0  & -\mu_2(\xi_5-\xi_4)  & \mu_2(\xi_5-\xi_4) 
\\
0   & 0    & 0    & 0  & 0
\\
0    &  0  & 0  & 0  & 0
\\
0     &  0  & 0   & 0  & 0
\end{pmatrix}
\end{align*}
or
\begin{align*}
\begin{pmatrix}
({\rm ad}^*_\xi\mu)_1 \\ ({\rm ad}^*_\xi\mu)_2 \\ ({\rm ad}^*_\xi\mu)_3 \\ ({\rm ad}^*_\xi\mu)_4 \\ ({\rm ad}^*_\xi\mu)_5
\end{pmatrix} 
= 
\begin{pmatrix}
0 & 0 & -\mu_1 & \mu_1 & 0
\\
0 & 0 &  0 & -\mu_2 & \mu_2
\\
\mu_1 & 0 & 0 & 0 &  0
\\
-\mu_1 & \mu_2 & 0 & 0 &  0
\\
0 & -\mu_2 & 0 & 0 &  0
\end{pmatrix}
\begin{pmatrix}
\xi_1 \\ \xi_2 \\ \xi_3 \\ \xi_4 \\ \xi_5
\end{pmatrix}
\end{align*}

These formulas are the ingredients needed for writing the Toda lattice equation \eqref{EP-eqns} in Euler-Poincar\'e form, namely, as
\begin{align*}
\frac{d\mu}{dt}  = {\rm ad}^*_\xi \mu
\quad\hbox{with}\quad
\mu := \frac{\partial l }{\partial \xi},
\end{align*}
in which $\xi:= g_t^{-1}\dot{g}_t$ for a Lagrangian $l(\xi)$ that is invariant under $S\circledS T$.  In components, this is
\begin{align}
\begin{split}
\dot{\mu}_1 &=  \mu_1(\xi_4-\xi_3)
\,,\\
\dot{\mu}_2 &=  \mu_2(\xi_5-\xi_4)
\,,\\
\dot{\mu}_3 &=  \mu_1\xi_1 
\,,\\
\dot{\mu}_4 &= \mu_2\xi_2 -\mu_1\xi_1
\,,\\
\dot{\mu}_5 &=  -  \mu_2\xi_2
\,.
\end{split}
\label{ad-star-3part}
\end{align} 
After a Legendre transformation to the corresponding Hamiltonian, $h(\mu)$, with $\xi_k=\partial h/\partial \mu_k$ 
and rearrangement into a matrix product form, this set of formulas becomes
\begin{align}
\begin{pmatrix}
\dot{\mu}_1 \\ \dot{\mu}_2 \\ \dot{\mu}_3 \\ \dot{\mu}_4
\\ \dot{\mu}_5
\end{pmatrix}
&=
\begin{pmatrix}
0 & 0 & -\mu_1 & \mu_1 & 0
\\
0 & 0 &  0 & -\mu_2 & \mu_2
\\
\mu_1 & 0 & 0 & 0 &  0
\\
-\mu_1 & \mu_2 & 0 & 0 &  0
\\
0 & -\mu_2 & 0 & 0 &  0
\end{pmatrix}
\begin{pmatrix}
{\partial h}/{\partial \mu_1} \\ {\partial h}/{\partial \mu_2} \\ {\partial h}/{\partial \mu_3} \\ {\partial h}/{\partial \mu_4} \\ {\partial h}/{\partial \mu_5}
\end{pmatrix}
\label{ad-star-n=3}
\end{align} 
Upon identifying $\mu = (\mu_1,\,\mu_2,\,\mu_3,\,\mu_4,\,\mu_5 ) =(a_1,\,a_2,\,b_1,\,b_2,\,b_3 ) $ in Flaschka's notation and  recalling the three-particle Toda Hamiltonian
\[h
=\mu_1^2+\mu_2^2+\frac12(\mu_3^2+\mu_4^2+\mu_5^2)
=a_1^2+a_2^2+\frac12(b_1^2+b_2^2+b_3^2),
\]
we finally recover the EP formulation of the known formulas for three-particle Toda lattice dynamics
\begin{align*}
\begin{pmatrix}
\dot{a}_1 \\ \dot{a}_2 \\ \dot{b}_1 \\ \dot{b}_2 \\ \dot{b}_3
\end{pmatrix}
&=
\begin{pmatrix}
0 & 0 & -a_1 & a_1 & 0
\\
0 & 0 &  0 & -a_2 & a_2
\\
a_1 & 0 & 0 & 0 &  0
\\
-a_1 & a_2 & 0 & 0 &  0
\\
0 & -a_2 & 0 & 0 &  0
\end{pmatrix}
\begin{pmatrix}
2a_1 \\ 2a_2 
\\ b_1 \\ b_2 \\ b_3
\end{pmatrix}
=
\begin{pmatrix}
a_1 (b_2-b_1) \\ a_2 (b_3-b_2)
\\ 2a_1^2 \\ 2(a_2^2-a_1^2) \\ -2a_2^2
\end{pmatrix}.
\end{align*} 
The Euler-Poincar\'e ODEs for the $n$-particle Toda lattice follows the same pattern as the three-particle case. The two-particle case further simplifies the problem. 

\subsection{Two particles}

The Euler-Poincar\'e equations for the two-particle Toda lattice follows by restricting the corresponding formula for $n=3$ in \eqref{ad-star-n=3}.
In the $n=2$ case, we have $2n-1=3$ and the group action $S\circledS T\times \mathbb{R}^3 \to \mathbb{R}^3$ becomes
\begin{align} 
(a,b_1,b_2)\to (e^{\beta_1-\beta_2}a,b_1-\alpha a,b_2+\alpha a)
\,.\label{Toda-sym-2part}
\end{align}

The matrix representations of the group $G$ and its Lie algebra $\mathfrak{g}$ for $n=2$ are given by
\begin{align*} 
g_t
=
\begin{pmatrix}
e^{\beta_1-\beta_2}  & 0   & 0   
\\
- \,\alpha  & 1    & 0
\\
\alpha  & 0  & 1  
\end{pmatrix}
\quad\hbox{and}\quad
\xi:= g_t^{-1}\dot{g}_t |_{t=0}= 
\begin{pmatrix}
\xi_2-\xi_3  & 0   & 0  
\\
- \,\xi_1  & 0    & 0
\\
\xi_1  & 0  & 0
\end{pmatrix}
=
\xi_1\hat{e}_1+\xi_2\hat{e}_2+\xi_3\hat{e}_3
,\end{align*}
where $\xi_1=\dot{\alpha}|_{t=0}$, $\xi_2=\dot{\beta}_1|_{t=0}$ and $\xi_3=\dot{\beta}_2|_{t=0}$.
The commutators among the matrix basis elements in this representation are
\begin{align*}
	[\hat{e}_2,\hat{e}_3] = 0, \qquad
	[\hat{e}_2,\hat{e}_1] = - \, \hat{e}_1,  \qquad
	[\hat{e}_3,\hat{e}_1] = \hat{e}_1.  
\end{align*}

The Lie algebra commutator is given for two Lie algebra elements $\xi=\boldsymbol{\xi}\cdot\mathbf{\hat{e}}$ and $\eta=\boldsymbol{\eta}\cdot\mathbf{\hat{e}}$ by 
\begin{align}
\begin{split}
{\rm ad}_\xi\eta = 
[\xi,\eta] &= 
\big(\xi_1(\eta_2-\eta_3) 
- \eta_1 (\xi_2-\xi_3)\big)
\,\hat{e}_1
\end{split}
\label{ad-xi-eta}
\end{align}
Thus, the components of ${\rm ad}_\xi\eta$ are
\begin{align}
\begin{split}
({\rm ad}_\xi\eta)_1 &=  \xi_1(\eta_2-\eta_3) - \eta_1 (\xi_2-\xi_3)
\,,\\
({\rm ad}_\xi\eta)_2 &=0= ({\rm ad}_\xi\eta)_3
\,.
\end{split}
\label{ad-xi_eta}
\end{align} 

An element of the dual Lie algebra is represented by the transpose matrix,
\begin{align}
\mu = 
\begin{pmatrix}
\mu_2 - \mu_3    & -\mu_1  & \mu_1 
\\
0   & 0    & 0  
\\
0    &  0  & 0  
\end{pmatrix}
=
\mu_1 \hat{e}^1 + \mu_2 \hat{e}^2 + \mu_3 \hat{e}^3
=
\mu_1 \hat{e}_1^T + \mu_2 \hat{e}_2^T + \mu_3 \hat{e}_3^T
\,.\label{mu-rep2}
\end{align}
Thus, for each $\xi\in \mathfrak{g}$, there exists $\mu\in \mathfrak{g}^*$.

We compute the action ${\rm ad}^*:\mathfrak{g}^*\times\mathfrak{g}\to\mathfrak{g}^*$ in matrix notation as follows
\begin{align}
\begin{split}
\langle \mu,\, {\rm ad}_\xi\eta \rangle 
&=
\frac12 {\rm trace} (\mu \, {\rm ad}_\xi\eta)
\\
&=
-\mu_1(\xi_2-\xi_3)\eta_1 + \mu_1\xi_1\eta_2 - \mu_1\xi_1\eta_3 
\\
&=
(
-\mu_1(\xi_2-\xi_3),\, \mu_1\xi_1,\, -\mu_1\xi_1
)
\!\cdot\!
( \eta_1,\,\eta_2 ,\, \eta_3 )^T
\\
&=
(
({\rm ad}^*_\xi\xi)_1,\, ({\rm ad}^*_\xi\xi)_2,\, ({\rm ad}^*_\xi\xi)_3
)
\!\cdot\!
( \eta_1,\,\eta_2 ,\, \eta_3 )^T
\\
&=
\frac12 {\rm trace}\, ({\rm ad}^*_\xi \xi \, \eta)
\\
&=:
\langle {\rm ad}^*_\xi\xi,\, \eta \rangle 
.
\end{split}
\label{adstar-components}
\end{align}
In matrix form, the formula for ${\rm ad}^*_\xi\mu$ is then represented using \eqref{mu-rep2}--\eqref{adstar-components} as
\begin{align}
{\rm ad}^*_\xi\mu = 
\begin{pmatrix}
2\mu_1\xi_1   & \mu_1(\xi_2-\xi_3)  & -\mu_1(\xi_2-\xi_3) 
\\
0   & 0    & 0  
\\
0    &  0  & 0  
\end{pmatrix}
.
\label{adstar-xi-mu}
\end{align}
\begin{remark}\label{ad-dagger-rem}\rm
As a side remark, we observe that on this Lie algebra the quantity $({\rm ad}^*_\xi\mu)^T\in \mathfrak{g}$ \emph{cannot} be written as a matrix commutator with this pairing, since 
\[
({\rm ad}^*_\xi\mu)^T=:{\rm ad}^\dagger_\xi(\mu^T)\ne -{\rm ad}_\xi(\mu^T)
\,.
\]
In contrast, one has ${\rm ad}^\dagger_\xi(\xi^T) = -{\rm ad}_\xi(\xi^T)$ for both of the Lie algebras $\mathfrak{so}(3)$ and $\mathfrak{sp}(2)$, whose corresponding EP equations are completely integrable.
\end{remark}

Formula \eqref{adstar-xi-mu} for ${\rm ad}^*_\xi \mu$ is the ingredient needed for writing the Euler-Poincar\'e equation 
\begin{align*}
\dot{\mu} = {\rm ad}^*_\xi \mu
\quad\hbox{with}\quad
\mu = \frac{\partial l }{\partial \xi},
\end{align*}
in which $\xi:= g_t^{-1}\dot{g}_t$ for a Lagrangian $l(\xi)$ that is invariant under $S\circledS T$.  In components, this is
\begin{align}
\begin{split}
\dot{\mu}_1 = ({\rm ad}^*_\xi\mu)_1 &=  -\mu_1(\xi_2-\xi_3)
\,,\\
\dot{\mu}_2 = ({\rm ad}^*_\xi\mu)_2 &=  \mu_1\xi_1
\,,\\
\dot{\mu}_3 = ({\rm ad}^*_\xi\mu)_3 &= -\mu_1\xi_1
\,.
\end{split}
\label{EP-comp}
\end{align} 
After Legendre transforming to the corresponding Hamiltonian, 
\[
h(\mu) = \langle\,\mu,\, \,\xi \rangle - l(\xi)
\,,
\]
with $\xi_k=\partial h/\partial \mu_k$ 
and rearrangement into a matrix product form, the set of formulas in \eqref{EP-comp} takes the  Lie-Poisson Hamiltonian form, $\dot{\mu}=B(\mu)\frac{\partial h}{\partial \mu}=\{\mu,h\}$ on the dual Lie algebra $\mathfrak{g}^*$, with
\begin{align}
\begin{pmatrix}
\dot{\mu}_1 \\ \dot{\mu}_2 \\ \dot{\mu}_3 
\end{pmatrix}
&=
\begin{pmatrix}
 0 & -\mu_1 & \mu_1 
\\
\mu_1 & 0 &  0 
\\
- \mu_1 & 0 & 0
\end{pmatrix}
\begin{pmatrix}
{\partial h}/{\partial \mu_1} \\ {\partial h}/{\partial \mu_2} \\ {\partial h}/{\partial \mu_3}
\end{pmatrix}
.\label{ad-star-n=2}
\end{align} 
Upon identifying $(\mu_1, \mu_2,\mu_3)= (a,\,b_1,\,b_2) $, this becomes
\begin{align*}
\begin{pmatrix}
\dot{a}  \\ \dot{b}_1 \\ \dot{b}_2 
\end{pmatrix}
&=
\begin{pmatrix}
0 & -a & a
\\
a & 0 &  0 
\\
-a & 0 & 0 
\end{pmatrix}
\begin{pmatrix}
{\partial h}/{\partial a}  
\\ {\partial h}/{\partial b_1} \\ {\partial h}/{\partial b_2} 
\end{pmatrix}
.\end{align*} 
Now by substituting the two-particle Toda Hamiltonian 
\[h=a^2+\frac12(b_1^2+b_2^2)
=\mu_1^2+\frac12(\mu_2^2+\mu_3^2),\]
we may finally write the standard formulas for two-particle Toda lattice dynamics as an Euler-Poincar\'e system
\begin{align*}
\begin{pmatrix}
\dot{a}  \\ \dot{b}_1 \\ \dot{b}_2 
\end{pmatrix}
&=
\begin{pmatrix}
0 & -a & a
\\
a & 0 &  0 
\\
-a & 0 & 0 
\end{pmatrix}
\begin{pmatrix}
2a \\ b_1 \\ b_2 
\end{pmatrix}
=
\begin{pmatrix}
a (b_2-b_1) \\ 2a^2 \\ -2a^2
\end{pmatrix}
.
\end{align*} 

We have now recovered the well-known Lie-Poisson Hamiltonian formulations of the two-particle and three-particle Toda dynamics, by casting the problem as an Euler-Poincar\'e (EP) problem. The methods discussed generalize easily to $n$ particles. For example, $n$-particle Toda dynamics is Hamiltonian with respect to the Lie-Poisson bracket defined on the dual of the Lie algebra for the left action of the semidirect product group $S\circledS T$ on $\mathbb{R}^{2n-1}$. Our goal is to formulate Toda lattice G-Strand dynamics (T-Strands), and for this we will need the EP formulation. We almost have our goal in sight. We know that $n$-particle Toda lattice ODEs represent coadjoint motion for the action of the Lie group $S\circledS T$ on $\mathbb{R}^{2n-1}$.  In particular, we know how to compute the ad$^*$ operation for this semidirect product Lie group, and this ad$^*$ operation will produce the G-Strand equations for a strand of interacting Toda lattices, as explained in the next section.

%%%%%%%%%%%%%%%%%%%%%%%%%%%%%%%%%%%
\subsection{Toda lattice G-Strands for $G=S\circledS T$}

This section formulates the dynamics of a strand of interacting Toda lattices depending on space and time. For this, it introduces G-Strands as a type of map $g(t,s): \mathbb{R}\times\mathbb{R}\to G$ of space-time $\mathbb{R}\times\mathbb{R}$ into a Lie group $G$. 

\begin{definition}[G-Strand \cite{HoIvPe2012}]\rm $\,$\\
A G-Strand is a map $g(t,s): \mathbb{R}\times\mathbb{R}\to G$ of space-time $\mathbb{R}\times\mathbb{R}$ into a Lie group $G$ that follows from Hamilton's principle for a $G$-invariant Lagrangian. 
\end{definition}

Consider Hamilton's principle $\delta S=0$ for a left-invariant Lagrangian, 
\begin{eqnarray}
S=\int_a^b \!\!\!\int_{-\infty}^\infty\!\! \ell(\xi,{\eta})\,ds\,dt
\,,
\end{eqnarray}
with the following definitions of the tangent vectors 
$\xi$ and ${\eta}$,
\begin{eqnarray}
\xi(t,s)=g^{-1}\partial_t g(t,s)
\quad\hbox{and}\quad
{\eta}(t,s)=g^{-1}\partial_s g(t,s)
\,,
\label{xisig-defs}
\end{eqnarray}
where $g(t,s)\in G$ is a real-valued map 
$g:\,\mathbb{R}\times\mathbb{R}\to G$ for a Lie group $G$. 

From equality of cross derivatives, one finds an auxiliary equation for the evolution of ${\eta}(t,s)$
\begin{equation}
\partial_t{\eta}(t,s) - \partial_s\xi(t,s)
= {\eta} \,\xi - \xi\,{\eta}
= [{\eta},\, \xi] 
=: -\,  {\rm ad}_\xi{\eta}
\,.
\label{aux-eqn-2time}
\end{equation}

\paragraph{Hamilton's principle.}
%-------------------------Part b
For $\gamma=g^{-1}\delta g(t,s)\in\mathfrak{g}$, Hamilton's principle $\delta S=0$ for 
$
S=\int_a^b \ell(\xi,{\eta})\,dt
$
leads to
\begin{eqnarray*}
\delta S
\!\!\!&=&\!\!\!
\int_a^b
\Big\langle \frac{\delta\ell}{\delta \xi}
\,,\,\delta \xi \Big\rangle
+
\Big\langle \frac{\delta\ell}{\delta {\eta}}
\,,\,\delta {\eta} \Big\rangle
\,dt
\\
\!\!\!&=&\!\!\!
\int_a^b
\Big\langle \frac{\delta\ell}{\delta \xi}
\,,\,\partial_t \gamma + {\rm ad}_\xi\gamma\Big\rangle
+
\Big\langle \frac{\delta\ell}{\delta {\eta}}
\,,\,\partial_s \gamma + {\rm ad}_{\eta}\gamma \Big\rangle
\,dt
\\
\!\!\!&=&\!\!\!
\int_a^b\!
\Big\langle \!-\partial_t \frac{\delta\ell}{\delta \xi}
+ {\rm ad}^*_\xi\frac{\delta\ell}{\delta \xi}
\,,\,\gamma\Big\rangle
+
\Big\langle\! -\partial_s \frac{\delta\ell}{\delta {\eta}}
+ {\rm ad}^*_{\eta}\frac{\delta\ell}{\delta {\eta}}
\,,\,\gamma \Big\rangle
\,dt
\\
\!\!\!&=&\!\!\!
\int_a^b
\Big\langle -\frac{\partial}{\partial t}\frac{\delta\ell}{\delta \xi}
+ {\rm ad}^*_\xi\frac{\delta\ell}{\delta \xi}
- \frac{\partial}{\partial s} \frac{\delta\ell}{\delta {\eta}}
+ {\rm ad}^*_{\eta}\frac{\delta\ell}{\delta {\eta}}
\,,\,\gamma \Big\rangle
\,dt
\,,
\end{eqnarray*}
where the formulas for the variations $\delta \xi$ and $\delta {\eta}$ are obtained from their definitions. Hence, 
$\delta S=0$ yields
\begin{equation}
\frac{\partial}{\partial t} \frac{\delta\ell}{\delta \xi}
- {\rm ad}^*_\xi\frac{\delta\ell}{\delta \xi}
+ \frac{\partial}{\partial s}  \frac{\delta\ell}{\delta {\eta}}
- {\rm ad}^*_{\eta}\frac{\delta\ell}{\delta {\eta}}
=0
\,.
\label{2timeEP1}
\end{equation}
This is the Euler--Poincar\'e equation for the time evolution of $\delta\ell/\delta\xi\in\mathfrak{g}^*$.

Upon defining $\xi:=\delta\ell/\delta \xi$ and $\nu:=\delta\ell/\delta {\eta}$, we have 
\begin{align}
\partial_t\xi - \partial_s\nu
=
{\rm ad}^*_\xi \xi
 - {\rm ad}^*_{\eta} \nu
\quad\hbox{with}\quad
\xi = \frac{\partial l }{\partial \xi}
\quad\hbox{and}\quad
\nu = \frac{\partial l }{\partial {\eta}},
\label{2timeEP2}
\end{align}
in which $\xi:= g^{-1}\partial_t{g}$ and ${\eta}:= g^{-1}\partial_s{g}$ for a Lagrangian $l(\xi,{\eta})$ that is invariant under $G$.

The G-Strand equations also include the auxiliary equation \eqref{aux-eqn-2time} for the evolution of ${\eta}(t,s)$
\begin{equation}
\partial_t{\eta} - \partial_s\xi
= -\,  {\rm ad}_\xi{\eta}
\,,
\label{aux-eqn-xisig}
\end{equation}
obtained from equality of cross derivatives of the definitions \eqref{xisig-defs}.

\begin{remark}\rm
The concept of G-Strands applies to molecular strands, or filaments, such as polymers and other long hydrocarbon molecules such as DNA \cite{ElGBHoPuRa2010}. In what follows, we will simplify the molecular structure at each point along the strand, by replacing the physical model with a two-particle Toda lattice.  This is the two-particle T-Strand. Later in section \ref{3PT-Strand-sec} we will also discuss the three-particle T-Strand, which illustrates the behavior of the $n$-particle case.  
\end{remark}

%%%%%%%%%%%%%%%%%%%%%%%%%%%%%%%%%%
\section{Two-particle T-Strands} \label{2PT-strand-sec}
This section specializes to the two-particle T-Strand equations and summarizes their Lagrangian and Hamiltonian evolutionary properties.  The wave properties of the two-particle T-Strand equations are revealed by rewriting them equivalently as a six-dimensional symmetric hyperbolic system with constant wave speed. 

The G-Strand Euler-Poincar\'e equation \eqref{2timeEP2} for the case $G=S\circledS T$ in the previous section may be written using equation \eqref{adstar-xi-mu} as
\begin{align}
\begin{split}
\partial_t\xi_1 - \partial_s\nu_1 = ({\rm ad}^*_\xi\xi)_1 - ({\rm ad}^*_{\eta} \nu)_1
&=  -\xi_1(\xi_2-\xi_3) + \nu_1({\eta}_2-{\eta}_3) 
\,,\\
\partial_t\xi_2 - \partial_s\nu_2 = ({\rm ad}^*_\xi\xi)_2 - ({\rm ad}^*_{\eta} \nu)_2 
&=  \xi_1\xi_1 - \nu_1{\eta}_1
\,,\\
\partial_t\xi_3  - \partial_s\nu_3 = ({\rm ad}^*_\xi\xi)_3 -  ({\rm ad}^*_{\eta} \nu)_3
&= -\xi_1\xi_1 + \nu_1{\eta}_1
\,,\end{split}
\label{Gstrand-EP-comp}
\end{align} 
where $\xi=\partial h/\partial \xi$ and ${\eta}=\partial h/\partial \nu$.
The corresponding compatibility equation \eqref{aux-eqn-xisig} may be written using equation \eqref{ad-xi-eta} as
\begin{align}
\begin{split}
\partial_t{\eta}_1 - \partial_s\xi_1 
= - ({\rm ad}_\xi{\eta})_1 
&= -  \xi_1({\eta}_2-{\eta}_3) + {\eta}_1 (\xi_2-\xi_3) 
\,,\\
\partial_t{\eta}_2 - \partial_s\xi_2 
= - ({\rm ad}_\xi{\eta})_2
&=  0
\,,\\
\partial_t{\eta}_3 - \partial_s\xi_3 
= - ({\rm ad}_\xi{\eta})_3
&= 0
\,.\end{split}
\label{Gstrand-aux-comp}
\end{align} 
Equations \eqref{Gstrand-EP-comp}--\eqref{Gstrand-aux-comp} form a system of first order PDEs for $(\x_1 , \x_2 , \x_3 , \e_1 , \e_2 , \e_3)$.

%%%%%%%%%%%%%%%%%%%%%%%%%%%%%%%%%
\subsection{Lagrangian and Hamiltonian formulations}

\begin{definition} \rm(Left-invariant G-Strand Lagrangian for the two-particle Toda Lattice)$\,$\\ For $\xi,\eta \in \mathfrak{g}$ we define the left-invariant Lagrangian for the two-particle T-Strand as
\begin{equation} 
l := \left(\frac{1}{2}{\xi_1}^2 +  \frac{1}{2}{\xi_2}^2 + \frac{1}{4}\xi_3^2 \right)
- \left(\frac{1}{2}{\eta_1}^2 +  \frac{1}{2}{\eta_2}^2 + \frac{1}{4}\eta_3^2\right) 
,\label{TS-Lag}
\end{equation}
where we may treat the $\xi$ terms as the kinetic energy of the system and the $\eta$ terms as the potential energy of the system.
\begin{remark} [The choice of the Lagrangian \eqref{TS-Lag}]\rm$\,$\\
Other choices of the Lagrangian would be possible. However, the choice of the T-Strand Lagrangian \eqref{TS-Lag} has been made because it will result in a set of 1+1 PDEs in $(s,t)$ that admits the zero curvature representation, 
\begin{equation} 
\partial_t L - \partial_s M = [L,M] 
\label{ZCR-eqn-Lagchoice}
\,,
\end{equation}
so it contains a Lax pair representation \eqref{Lax-pair} in each independent variable separately. 
\end{remark} 
To obtain the Hamiltonian, $h:  \mathfrak{g}^{\ast}  \to \mathbb{R}$ corresponding to the Lagrangian \eqref{TS-Lag} we take the Legendre transform of this Lagrangian,
\begin{equation*} h(\mu,\e) = \left\langle \mu, \x \right\rangle - l(\xi,\e) \,. \end{equation*}
Taking the total derivative of the Hamiltonian yields
\begin{equation*} d h 
=
 \left\langle d\mu,  \x \right\rangle 
+ \left\langle \mu -\frac{\partial l}{\partial \xi},\,d \xi \right\rangle 
- \left\langle \frac{\partial l}{\partial \e},\,d \e \right\rangle 
,\end{equation*}
from which we find the relations
\begin{equation}
        \frac{\partial h}{\partial \mu} = \x \,,\qquad
        \frac{\partial l}{\partial \x} = \mu \,,\qquad
       \frac{\partial l}{\partial \e} = -\,\frac{\partial h}{\partial \e} \,.
       \label{TS-LegXform}
\end{equation}

Thus, for the two-particle Toda system, the T-Strand Hamiltonian is
\begin{equation*}
h(\mu,\e)= \left(\frac{1}{2}\mu_1^2 +  \frac{1}{2}\mu_2^2 + \mu_3^2 \right)
+ \left(\frac{1}{2}\eta_1^2 +  \frac{1}{2}\eta_2^2 + \frac{1}{4}\eta_3^2  \right)
\,.
\end{equation*}

\end{definition}

\subsection{Space-time evolutionary solutions of the system \eqref{Gstrand-EP-comp}--\eqref{Gstrand-aux-comp}} 
\begin{theorem} \rm \label{Ham-matrix-eqns}
The equation of motion and the compatibility equation may be combined on the Hamiltonian side into the following matrix form,
\begin{equation*}
\frac{\partial}{\partial t} \begin{bmatrix} \mu_1 \\ \mu_2 \\ \mu_3 \\ \e_1 \\ \e_2 \\ \e_3 \end{bmatrix} = 
\begin{bmatrix} 0 & 0 & \mu_3 & \partial_s & 0 & -\e_3 \\ 0 & 0 & -\mu_3 & 0 & \partial_s & \e_3 \\ -\mu_3 & \mu_3 & 0 &0 & 0  & \partial_s -\e_2 + \e_1  \\ \partial_s &0&0&0&0&0 \\ 0&\partial_s&0&0&0&0 \\ \e_3 & -\e_3 & \partial_s + \e_2-\e _1 & 0 & 0 & 0\end{bmatrix} \begin{bmatrix} \mu_1 \\ \mu_2 \\ 2\mu_3 \\ \e_1 \\ \e_2 \\ \frac{1}{2} \e_3 \end{bmatrix}
=: M  \begin{bmatrix} \frac{\de h}{\de \mu} \\  \frac{\de h}{\de \e} \end{bmatrix}
\end{equation*}
\end{theorem}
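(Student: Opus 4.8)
The plan is to simply verify the matrix identity by expanding the right-hand side and matching it against the equations of motion \eqref{Gstrand-EP-comp} and the compatibility equations \eqref{Gstrand-aux-comp}, after translating everything into the Hamiltonian variables via the Legendre transform relations \eqref{TS-LegXform}. First I would record the dictionary between the Lagrangian variables and the Hamiltonian variables: from the Hamiltonian $h(\mu,\eta)=\tfrac12\mu_1^2+\tfrac12\mu_2^2+\mu_3^2+\tfrac12\eta_1^2+\tfrac12\eta_2^2+\tfrac14\eta_3^2$ we read off $\partial h/\partial\mu=(\mu_1,\mu_2,2\mu_3)$ and $\partial h/\partial\eta=(\eta_1,\eta_2,\tfrac12\eta_3)$, while \eqref{TS-LegXform} identifies $\xi=\partial h/\partial\mu$, so $\xi_1=\mu_1$, $\xi_2=\mu_2$, $\xi_3=2\mu_3$, and $\nu=\partial l/\partial\eta=-\partial h/\partial\eta$, so $\nu_1=-\eta_1$, $\nu_2=-\eta_2$, $\nu_3=-\tfrac12\eta_3$. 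Care with these signs is the one genuinely delicate bookkeeping point.

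Next I would substitute these into the three Euler--Poincar\'e equations \eqref{Gstrand-EP-comp}. The first becomes $\partial_t\mu_1=\partial_s\nu_1-\xi_1(\xi_2-\xi_3)+\nu_1(\eta_2-\eta_3)=-\partial_s\eta_1-\mu_1(\mu_2-2\mu_3)-\eta_1(\eta_2-\eta_3)$, and I would check this matches row one of the claimed matrix acting on $(\mu_1,\mu_2,2\mu_3,\eta_1,\eta_2,\tfrac12\eta_3)^T$, namely $\mu_3\cdot 2\mu_3+\partial_s\eta_1-\eta_3\cdot\tfrac12\eta_3$; the nonlinear terms here need the identity $\xi_1(\xi_2-\xi_3)=\mu_1\mu_2-2\mu_1\mu_3$ together with the fact that $-\partial_s\eta_1$ is the transcription of $\partial_s\nu_1$ — here I suspect a sign convention is being absorbed, so I would pay attention to whether the matrix entry $\partial_s$ in row one acting on $\eta_1$ is meant to reproduce $\partial_s\nu_1=-\partial_s\eta_1$ or whether the authors have implicitly re-oriented $\nu$; this reconciliation is the crux. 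Rows two and three of \eqref{Gstrand-EP-comp} are the pair $\partial_t\mu_2=-\partial_s\eta_2+\mu_1^2+\eta_1^2$-type relations (again modulo the $\nu$ sign), matched against rows two and three of the matrix, which are antisymmetric copies of each other in the $\mu_3$ column, consistent with $\dot b_1=-\dot b_2$ structure inherited from Toda.

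Then I would handle the compatibility equations \eqref{Gstrand-aux-comp}: $\partial_t\eta_1=\partial_s\xi_1-\xi_1(\eta_2-\eta_3)+\eta_1(\xi_2-\xi_3)=\partial_s\mu_1-\mu_1(\eta_2-\eta_3)+\eta_1(\mu_2-2\mu_3)$, to be matched against row four, which in the displayed matrix reads simply $\partial_s\mu_1$ — so the nonlinear terms $-\mu_1(\eta_2-\eta_3)+\eta_1(\mu_2-2\mu_3)$ must be what the authors have redistributed into the off-diagonal entries $-\eta_3,\ \eta_3,\ \partial_s-\eta_2+\eta_1$ of rows one--three and the mirror entries $\eta_3,\ -\eta_3,\ \partial_s+\eta_2-\eta_1$ of rows four--six; in other words the asymmetry of the matrix $M$ precisely encodes the distribution of the ${\rm ad}_\xi\eta$ terms. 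The last two rows, $\partial_t\eta_2=\partial_s\xi_2$ and $\partial_t\eta_3=\partial_s\xi_3$, translate to $\partial_t\eta_2=\partial_s\mu_2$ and $\partial_t\eta_3=2\,\partial_s(\tfrac12\eta_3\cdot\text{(coefficient)})$ — wait, more carefully $\partial_t\eta_3=\partial_s\xi_3=2\partial_s\mu_3$, which matches row six only if the $\partial_s$ there acts on the entry $2\mu_3$ of the gradient vector, which it does. The main obstacle, then, is not difficulty but pure sign-and-factor bookkeeping: I expect the only real work is confirming that the Legendre factors $2$ and $\tfrac12$ and the sign of $\nu$ versus $\eta$ all conspire so that every entry of $M$, including the non-obvious placement of $\eta_1,\eta_2,\eta_3$ off the diagonal, reproduces \eqref{Gstrand-EP-comp}--\eqref{Gstrand-aux-comp} exactly; if a sign fails to close it will be in row one or in the $\partial_s\nu$ terms, where the convention $\nu=-\partial h/\partial\eta$ is most easily mishandled.
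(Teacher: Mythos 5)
Your overall strategy---verify the matrix identity row by row after passing through the Legendre transform---is a legitimate bottom-up alternative to the paper's top-down argument (the paper writes the Hamiltonian-side equations abstractly as $\partial_t\mu-{\rm ad}^*_{\delta h/\delta\mu}\mu=\partial_s\,\delta h/\delta\eta-{\rm ad}^*_\eta\,\delta h/\delta\eta$ and $\partial_t\eta-\partial_s\,\delta h/\delta\mu=-{\rm ad}_{\delta h/\delta\mu}\eta$, assembles the block operator, and then inserts the ${\rm ad}$, ${\rm ad}^*$ components). But as written your verification would not close, for a reason you have not identified: between \eqref{Gstrand-EP-comp}--\eqref{Gstrand-aux-comp} and Theorem \ref{Ham-matrix-eqns} the paper silently relabels the basis of $\mathfrak{s}\circledS\mathfrak{t}$. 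In \eqref{Gstrand-EP-comp}--\eqref{Gstrand-aux-comp} the distinguished (Flaschka $a$-like) direction carries index $1$---every nonlinear term involves $\xi_1,\nu_1,\eta_1$ and the differences $\xi_2-\xi_3$, $\eta_2-\eta_3$---whereas the Lagrangian \eqref{TS-Lag} (note the coefficient $\tfrac14$ on $\xi_3^2$), the Hamiltonian, and the matrix in the theorem put it at index $3$: the coadjoint components used in the proof are $({\rm ad}^*_\xi\mu)=(\mu_3\xi_3,\,-\mu_3\xi_3,\,\mu_3(\xi_2-\xi_1))$, not \eqref{EP-comp} verbatim. Consequently your ``row one'' check---matching $-\mu_1(\mu_2-2\mu_3)-\eta_1(\eta_2-\eta_3)$ against $2\mu_3^2-\tfrac12\eta_3^2$---cannot be repaired by any sign or factor bookkeeping; the expressions are structurally different, and what you flag as ``a sign convention being absorbed'' is really this unacknowledged permutation of indices. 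A correct verification must first restate \eqref{Gstrand-EP-comp}--\eqref{Gstrand-aux-comp} (equivalently, recompute ${\rm ad}$ and ${\rm ad}^*$) in the index-$3$ convention, and only then substitute $\delta h/\delta\mu=(\mu_1,\mu_2,2\mu_3)$ and $\delta h/\delta\eta=(\eta_1,\eta_2,\tfrac12\eta_3)$.

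Two further gaps. First, your reading of the matrix structure is wrong in one place: the entries $-\eta_3,\ \eta_3,\ \partial_s-\eta_2+\eta_1$ in rows one through three are not ``redistributed'' nonlinearities from the compatibility equations; they are exactly the $\partial_s-{\rm ad}^*_\eta$ block acting on $\delta h/\delta\eta$ inside the momentum (Euler--Poincar\'e) equations. The compatibility equations contribute only rows four through six, and their single nonlinear term sits entirely in row six, because ${\rm ad}_\xi\eta$ has one nonvanishing component (along the distinguished direction); rows four and five are exactly $\partial_t\eta_{1,2}=\partial_s\mu_{1,2}$ with nothing left over. Second, the sign issue you correctly isolate ($\nu=\partial\ell/\partial\eta=-\,\delta h/\delta\eta$ versus the $+\partial_s$ entries of the matrix) is left undecided---you call it ``the crux'' but do not resolve it, and it cannot be settled by appealing to the Lagrangian-side displays, since \eqref{2timeEP1} and \eqref{2timeEP2} are themselves not sign-consistent with each other. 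To prove the theorem you must commit to the Hamiltonian-side form quoted above and expand it with the index-$3$ components; hoping that the signs ``conspire'' leaves the central step of the verification unproved.
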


\begin{proof} 
As a result of Proposition \ref{EP-eqns} we may write the equations of motion and compatibility on the Hamiltonian side as:
\begin{equation} {\partial_t} \mu - {\rm ad}_{ \frac{\delta h}{\delta \mu} }^{\ast}  \mu 
= {\partial_s} \frac{\delta h}{\delta \e}  
- {\rm ad}_{\eta}^{\ast}  \frac{\delta h}{\delta \e} 
\,,\end{equation}
\begin{equation}  
{\partial_t} \eta
- {\partial_s} \frac{\delta h}{\delta \mu} 
= 
-\left[ \frac{\delta h}{\delta \mu},\eta \right] 
= -{\rm ad}_{ \frac{\delta h}{\delta \mu}} \eta 
\,,\end{equation}
which we may formulate in matrix form as
\begin{equation} 
\frac{\partial}{\partial t} \begin{bmatrix} \mu \\ \e \end{bmatrix} = \begin{bmatrix} {\rm ad}_{ \Box }^{\ast} \mu & \partial_s - {\rm ad}_{\eta}^{\ast} \\  \partial_s + {\rm ad}_{\eta}^{\ast}  & 0 \end{bmatrix}  \begin{bmatrix} \frac{\de h}{\de \mu} \\  \frac{\de h}{\de \e} \end{bmatrix} \,.
\end{equation}
The components of ad$^*$ for $(\mathfrak{s}\circledS \mathfrak{t})^*$ are given in equation \eqref{EP-comp} as
\begin{equation*} 
{\rm ad}_{\xi}^{\ast} \mu = \left(\mu_3 \xi^3, -\mu_3 \xi^3, \mu_3 (\xi^2 - \xi^1) \right).
\end{equation*}
Inserting the definitions for coadjoint and adjoint actions yields the result. 
\end{proof}
\begin{remark}\rm The matrix in Theorem \ref{Ham-matrix-eqns} is similar to the lower right corner of the Hamiltonian matrix for a perfect complex fluid \cite{Ho2002,GBRa2009}. It also shows up in the Lie--Poisson brackets for Yang-Mills fluids \cite {HoKu1984} and for spin glasses \cite{DzVo1980,HoKu1988}.
\end{remark}
Finally, we write the two-particle T-Strand equations in component form as
\begin{align}
\begin{split}
&\partial_t\mu_1 - \partial_s\e_1 = 2 \mu_3^2  - \frac{1}{2} \e_3^2\,,  \\
&\partial_t\mu_2 - \partial_s\e_2  = - 2 \mu_3^2 + \frac{1}{2} \e_3^2 \,, \\
& \partial_t\mu_3 -  \partial_s\frac{\e_3}{2}  = - \mu_3(\mu_1 - \mu_2)
+ \frac{\e_3}{2}(\e_1 - \e_2) 
\,, \\
& \partial_t\e_1 - \partial_s\mu_1 = 0 \,, \\
& \partial_t\e_2 - \partial_s\mu_2 = 0 \,, \\
& \partial_t\frac{\e_3}{2} - \partial_s\mu_3 = -\mu_3(\e_1-\e_2 ) + \frac{\e_3}{2}(\mu_1 - \mu_2 ) \,.  
\label{Ham-eqns} 
\end{split}
\end{align}
Inverting the relations 
$\frac{\delta h}{\delta \mu}
=(\mu_1,\mu_2, 2\mu_3)
=\xi= (\xi_1,\xi_2, \xi_3) $
now yields the following.

\begin{theorem}\label{charactform}\rm
Equations \eqref{Ham-eqns} for T-Strands may be written as a six-dimensional symmetric hyperbolic system in $(\x_1 , \x_2 , \x_3 , \e_1 , \e_2 , \e_3)$ with characteristic speeds $c$ given by $c^2=1$.
\end{theorem}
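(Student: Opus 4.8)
The plan is to reduce the Hamiltonian-side system \eqref{Ham-eqns} to an evolution equation purely in the velocity variables by using the Legendre relation $\delta h/\delta\mu=\xi$ from \eqref{TS-LegXform}, which for the two-particle Toda Hamiltonian reads $\xi_1=\mu_1$, $\xi_2=\mu_2$, $\xi_3=2\mu_3$. Substituting these identifications into \eqref{Ham-eqns}, and clearing the overall factor of $2$ in the third and sixth rows, recasts the system as a first-order evolution law for $U:=(\xi_1,\xi_2,\xi_3,\eta_1,\eta_2,\eta_3)^T$ of the form
\[
\partial_t U = J\,\partial_s U + F(U), \qquad J=\begin{pmatrix} 0 & I_3 \\ I_3 & 0 \end{pmatrix},
\]
where $F$ collects the quadratic source terms. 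The structural point, already visible in \eqref{Ham-eqns}, is that $\partial_s$ enters only through $\partial_s(\delta h/\delta\mu)$ and $\partial_s(\delta h/\delta\eta)$, so after the substitution the coefficient of $\partial_s U$ is a numerical matrix, independent of $U$; the nonlinearity is confined to the zeroth-order term $F$.

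Next I would write the system in the Friedrichs form $A_0\,\partial_t U+A_1\,\partial_s U=F(U)$ with $A_0=I_6$ and $A_1=-J$. Both $A_0$ and $A_1$ are symmetric and $A_0$ is positive definite, which is exactly the definition of a symmetric hyperbolic system; the source $F$ is irrelevant to this classification since it does not contribute to the principal symbol. The characteristic speeds are the eigenvalues of $A_1$ relative to $A_0$, i.e.\ the eigenvalues of $-J$. Since $J^2=I_6$ its eigenvalues are $\pm1$, and the block-swap symmetry of $J$ (eigenvectors $(v,\pm v)$ for $v\in\mathbb{R}^3$) forces each of $\pm1$ to have multiplicity $3$, so $\det(cI_6-J)=(c-1)^3(c+1)^3$. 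Hence $c=\pm1$ and $c^2=1$, as claimed. Optionally one can make this fully explicit by diagonalising $J$: the Riemann-invariant combinations $\xi_i\pm\eta_i$ propagate with constant speeds $\mp1$.

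There is essentially no analytic obstacle here; the argument is a direct substitution followed by reading off a constant principal part. The only thing requiring care is bookkeeping — correctly tracking the factors $2$ and $1/2$ arising from $\delta h/\delta\mu_3=2\mu_3$, $\delta h/\delta\eta_3=\eta_3/2$, and the identification $\xi_3=2\mu_3$ — so that the principal part emerges as the clean involution $J$ rather than a spuriously more complicated matrix; if those factors were mishandled the symmetry of $A_1$, and hence both the symmetric-hyperbolic property and the value $c^2=1$, could appear to fail.
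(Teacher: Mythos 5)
Your proposal is correct and follows essentially the same route as the paper: invert the Legendre transformation $(\mu_1,\mu_2,\mu_3)=(\xi_1,\xi_2,\tfrac12\xi_3)$, rewrite \eqref{Ham-eqns} as $\partial_t U - J\,\partial_s U = F(U)$ with the constant block-swap matrix $J$, and read off the characteristic polynomial $(c^2-1)^3=0$, hence $c^2=1$. The extra detail you supply (the Friedrichs form $A_0=I_6$, $A_1=-J$ and the eigenvector multiplicity argument) is a harmless elaboration of what the paper states more tersely.
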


\begin{proof}
The T-Strand equations \eqref{Ham-eqns} may be written in characteristic form, by inverting the Legendre transformation \eqref{TS-LegXform} which gives the linear relations
$ 
(\mu_1,\mu_2,\mu_3) = \left(\x_1,\x_2,\frac{1}{2}\x_3\right) 
,
$ 
and these relations allow rearrangement of \eqref{Ham-eqns} into the following symmetric hyperbolic form,
\begin{equation}
\frac{\partial}{\partial t} 
\begin{bmatrix} \x_1 \\ \x_2 \\ \x_3 \\ \e_1 \\ \e_2 \\ \e_3 \end{bmatrix} 
- 
\begin{bmatrix} 
   0 & 0 & 0 & 1 & 0 & 0 
\\ 0 & 0 & 0 & 0 & 1 & 0 
\\ 0 & 0 & 0 & 0 & 0  & 1 
\\ 1 & 0 & 0 & 0 & 0 & 0 
\\ 0 & 1 & 0 & 0 & 0 & 0 
\\ 0 & 0 & 1 & 0 & 0 & 0
\end{bmatrix} 
\partial_s
\begin{bmatrix} \x_1 \\ \x_2 \\ \x_3 \\ \e_1 \\ \e_2 \\ \e_3 \end{bmatrix}
=
\begin{bmatrix}
  \frac12(\x_3^2 - \e_3^2) 
\\  - \frac12(\x_3^2 - \e_3^2) 
\\ (\x_1-\x_2)\x_3 - (\e_1-\e_2) \e_3
\\ 0 
\\ 0 
\\ (\x_1-\x_2)\e_3 - (\e_1-\e_2)\x_3 
\end{bmatrix}
\label{hyperbolicLag-eqns}
\end{equation}
The characteristic polynomial of this system has roots given by $(c^2-1)^3$=0.
Thus, the characteristic speeds $c$ are given by $c^2=1$.
\end{proof}

%%%%%%%%%%%%%%%%%%%%%%%%%%%%%%%%%%%%
\section{Travelling waves for the two-particle T-Strand}\label{TWs-sec}

As shown in the previous section, equations \eqref{Ham-eqns} for the two-particle T-Strand form a symmetric hyperbolic system with constant characteristic speeds $c$ given by $c^2=1$. This is clear from equation \eqref{hyperbolicLag-eqns}, which was obtained by inverting the Legendre transformation \eqref{TS-LegXform} in the proof of Theorem \ref{charactform}. The present section constructs explicit formulas for travelling wave solutions of the system \eqref{hyperbolicLag-eqns} for an arbitrary wave speed, $c\ne\pm1$. 

The space-time translation invariance of equations \eqref{hyperbolicLag-eqns} implies that the system admits the travelling-wave reduction,
\[
\xi_i(s,t)=\bar{\xi}_i(\tau)
\quad\hbox{and}\quad
\eta_i(s,t)=\bar{\eta}_i(\tau),
\quad\hbox{with}\quad
\tau:=s-ct
\quad\hbox{for}\quad
i=1,2,3. 
\]
Substituting these travelling-wave forms of the solutions into equations \eqref{hyperbolicLag-eqns} yields a six-dimensional system of ordinary differential equations (ODE) (dropping the bars for simpler notation)%
\begin{align}
\begin{split}
&-c\dot\x_1 -  \frac{1}{2} \x_3^2 = \dot\e_1 - \frac{1}{2} \e_3^2\,,  \\
&-c\dot\x_2 + \frac{1}{2} \x_3^2 = \dot\e_2 + \frac{1}{2} \e_3^2 \,, \\
& -c\dot\x_3 - \x_3(\x_2 - \x_1) = \dot\e_3 - \e_3(\e_2 - \e_1) \,, \\
& -c\dot\e_1 - \dot\x_1 = 0 \,, \\
& -c\dot\e_2 - \dot\x_2 = 0 \,, \\
& -c\dot\e_3 - \dot\x_3 = - \e_1 \x_3 + \e_3 \x_1 - \e_3 \x_2 + \e_2 \x_3 \,,
\label{TWSLag-eqns}
\end{split}
\end{align}
where one denotes $\dot\x_1=d\x_1 / d\tau$, etc.

This ODE system has three conservation laws that are expressible as invertible linear relations
\begin{align}
\begin{split}
& d_1 :=  c\e_1 + \x_1 = const\,,  \\
& d_2 := c\e_2 + \x_2 = const\,,  \\
& d_3 := c(\x_1+\x_2)+\eta_1+\eta_2 = const \,.
\end{split}
\label{TWSConstants}
\end{align}
Consequently, one may elimate three equations of (\ref{TWSLag-eqns}) in favour of the constants $d_1,d_2$ and $d_3$ in (\ref{TWSConstants}). A series of linear transformations then reduces the travelling wave equations (\ref{TWSLag-eqns}) to the form
\begin{align}
\begin{split}
\dot{X} &=\frac{YZ}{(1-c^2)^2}\,, \\
\dot{Y} &=-Z\left(X + \frac{\alpha}{1-c}\right) , \\
\dot{Z} &=-Y\left(X - \frac{\alpha}{1+c}\right) ,
\end{split}
\label{RedTWS}
\end{align}
where $\alpha$ is a constant of the motion.
\begin{proposition} 
The system \eqref{RedTWS} is divergence free and its solutions can be interpreted as motion along the intersections of two level sets, $C_i = C_i(X,Y,Z)$, for $i=1,2$ in $\mathbb{R}^3$. 
\end{proposition}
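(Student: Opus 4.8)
The plan is to verify the two assertions separately: first that the vector field in \eqref{RedTWS} is divergence free, and second that it admits two functionally independent first integrals, so that its trajectories lie on the curves cut out by intersecting the corresponding level surfaces in $\mathbb{R}^3$.

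For the divergence, I would simply note that $\dot X$ depends only on $(Y,Z)$, that $\dot Y$ depends only on $(X,Z)$, and that $\dot Z$ depends only on $(X,Y)$; hence $\partial_X\dot X=\partial_Y\dot Y=\partial_Z\dot Z=0$ identically and the divergence vanishes with no cancellation needed. This also records that the flow of \eqref{RedTWS} preserves the standard volume element on $\mathbb{R}^3$.

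For the integrals, the key algebraic remark is that the first line of \eqref{RedTWS} can be read as $YZ=(1-c^2)^2\,\dot X$, which lets one trade the quadratic cross term $YZ$ that appears in $\frac{d}{d\tau}(Y^2)$ and $\frac{d}{d\tau}(Z^2)$ for a multiple of $\dot X$. Concretely, from
\[
\frac{d}{d\tau}\Big(\tfrac12 Y^2\Big) = -\,YZ\Big(X+\tfrac{\alpha}{1-c}\Big),
\qquad
\frac{d}{d\tau}\Big(\tfrac12 Z^2\Big) = -\,YZ\Big(X-\tfrac{\alpha}{1+c}\Big),
\]
I would look for constants $p,q,r$ and a linear coefficient so that in $\frac{d}{d\tau}\big(pX^2+qY^2+rZ^2+(\mathrm{lin})X\big)$ the $XYZ$ contributions cancel (which forces $p=(1-c^2)^2(q+r)$) while the residual $YZ$ terms, rewritten via $YZ=(1-c^2)^2\dot X$, are absorbed into the linear term. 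Taking $(q,r)=(1,0)$ and $(q,r)=(0,1)$ produces two quadratic functions $C_1,C_2$, and I would then differentiate each along \eqref{RedTWS} to confirm $\dot C_1=\dot C_2=0$. Up to an affine normalization these are exactly the two families of offset elliptic cylinders (with axes along the $Z$- and $Y$-directions respectively) described in Section \ref{TWs-sec}; their difference $C_1-C_2$ is the hyperbolic-cylinder invariant $\tfrac12(Y^2-Z^2)+(\mathrm{lin})X$, which explains the orthogonal alignment.

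To finish, I would observe that $\nabla C_1$ and $\nabla C_2$ are linearly independent on an open dense subset of $\mathbb{R}^3$ (already their quadratic parts differ), so each common level set $\{C_1=k_1\}\cap\{C_2=k_2\}$ is generically a one-dimensional curve; since $C_1,C_2$ are constant along trajectories, every solution of \eqref{RedTWS} is confined to one such intersection — and one may equivalently record the system in the Nambu form $\dot{\mathbf{x}}=\mu(\mathbf x)\,\nabla C_1\times\nabla C_2$ for a suitable scalar $\mu$. The only genuinely non-mechanical step is the \emph{discovery} of the pair $(C_1,C_2)$; once they are written down, the proof reduces to a direct differentiation, so I would state the $C_i$ explicitly and let the verification do the work. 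I would also keep the standing assumption $c\neq\pm1$ in force throughout, since both the reduction \eqref{RedTWS} and the normalizing factors $(1-c^2)$ degenerate at $c=\pm1$.
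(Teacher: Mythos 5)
Your proposal is correct and follows essentially the same route as the paper: check that the vector field is divergence free, and exhibit the two quadratic invariants (the offset elliptic cylinders $C_1,C_2$ of \eqref{2ellipcyls}) whose common level sets carry the trajectories, the cross-product (Nambu) form $\dot{\mathbf{X}}=\nabla C_1\times\nabla C_2$ being the same structural observation the paper uses. The only difference is presentational: the paper writes down $C_1,C_2$ and verifies the cross-product identity, whereas you derive them constructively by trading $YZ=(1-c^2)^2\dot X$ into the $\tfrac{d}{d\tau}(Y^2)$ and $\tfrac{d}{d\tau}(Z^2)$ equations, which is a nice way to motivate the invariants but yields the same functions up to normalization.
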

\begin{proof}
The proof begins by observing that the system (\ref{RedTWS}) is divergence free in $\mathbb{R}^3$. That is, for a vector ${\bf X}:=(X,Y,Z)^T\in\mathbb{R}^3$ system (\ref{RedTWS}) satisfies
\begin{equation*} 
\nabla\cdot\dot{\textbf{X}}=0\,. 
\end{equation*}
Consequently, the motion preserves the volume form $dX \wedge dY \wedge dZ$, and the system \eqref{RedTWS} can be expressed locally as the cross product of gradients,
\begin{equation}
\dot{\textbf{X}}  = \nabla C_1 \times \nabla C_2 
\,.
\label{TW-vec-eqn}
\end{equation}
Consequently, the solutions of \eqref{RedTWS} move along the intersections of level surfaces of $C_1$ and $C_2$ in $\mathbb{R}^3$.
For our system, this is satisfied for
\begin{align}
\begin{split}
C_1 &=\frac{{(1-c)}^2}{2}\left(X+\frac{\alpha}{1-c}\right)^2+\frac{Y^2}{2{(1+c)}^2} \,, \\
C_2 &=\frac{{(1+c)}^2}{2}\left(X-\frac{\alpha}{1+c}\right)^2+\frac{Z^2}{2{(1-c)}^2}  \,.
\end{split}
\label{2ellipcyls}
\end{align}
The quadratic level surfaces of $C_1$ and $C_2$ comprise two families of elliptic cylinders that are orthogonal to each other in $\mathbb{R}^3$, and whose centres are offset from each other along the $X$-coordinate axis by an amount proportional to $\alpha$.
\end{proof}
\begin{remark} \rm The solution of the travelling wave system (\ref{RedTWS}) has now been reduced to determining the motion along intersections of two families of orthogonal elliptic cylinders in $\mathbb{R}^3$. These are the level sets of $C_1$, whose axis is given as
\begin{equation*} Y=0 \qquad X=-\,\frac{\alpha}{1-c} \,, \end{equation*}
and the level set of $C_2$ whose axis is given as
\begin{equation*} Z=0 \qquad X=\frac{\alpha}{1+c} \,. \end{equation*}
\end{remark}
We note some additional properties of the travelling wave system  \eqref{TWSLag-eqns} that arise from its equivalent representation as \eqref{RedTWS}:
\begin{itemize} 
\item No solutions of \eqref{TWSLag-eqns} exist when $c=\pm1$,  as the elliptic cylinders in \eqref{2ellipcyls} will be undefined at these values. For $c=\pm1$ it can also be shown that (\ref{TWSLag-eqns}) results in an under-determined system of ODEs. This is in stark contrast to the characteristic speeds of the hyperbolic PDE system (\ref{Ham-eqns}), which are given by $c^2=1$. Consequently, one may think of the travelling wave solutions of (\ref{TWSLag-eqns}) and the characteristic solutions of (\ref{Ham-eqns}) as being disjoint solution sets whose union encompasses all possible wave speeds. 
\item Travelling wave solutions of (\ref{TWSLag-eqns}) lie on paths that take the form of the edge of a deformed `pringle' shape, as shown in Figure \ref{fig1}, which illustrates the intersections of level surfaces of $C_1$ and $C_2$. All solutions will therefore be periodic, unless the intersections are critical points. At critical points the two elliptic cylinders are tangent, the gradients of $C_1$ and $C_2$ are collinear, and the travelling wave solution will be stationary in the moving frame.
\end{itemize}
\begin{proposition} $\,$\\
On level sets of $C_1$, for $\alpha=0$, the motion reduces to that of a simple pendulum. \end{proposition}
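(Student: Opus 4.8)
The plan is to set $\alpha=0$ throughout, exploit the fact that the level sets of $C_1$ then become cylinders over an ellipse in the $(X,Y)$-plane, introduce an angular coordinate on that ellipse, and differentiate twice along the flow.

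First, with $\alpha=0$ the reduced system \eqref{RedTWS} becomes $\dot X = YZ/(1-c^2)^2$, $\dot Y = -XZ$, $\dot Z = -XY$, and $C_1$ in \eqref{2ellipcyls} reduces to $C_1=\tfrac12(1-c)^2X^2+\tfrac12(1+c)^{-2}Y^2$, which depends only on $(X,Y)$. A level set $\{C_1=k\}$ with $k>0$ is therefore the cylinder over the nondegenerate ellipse $(1-c)^2X^2+(1+c)^{-2}Y^2=2k$ (recall $c\neq\pm1$). I would parametrize this ellipse by an angle $\theta$ via $X=\tfrac{\sqrt{2k}}{1-c}\cos\theta$ and $Y=(1+c)\sqrt{2k}\,\sin\theta$, so that on the level set the flow descends to a second-order ODE for the single variable $\theta$, with $Z$ entering as a conjugate momentum that is eliminated.

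Next, I would differentiate $X(\theta)$ along the flow and match with $\dot X=YZ/(1-c^2)^2$. Using $(1-c)(1+c)=1-c^2$, all factors of $\sqrt{2k}$ and $\sin\theta$ cancel and one obtains the clean relation $\dot\theta=-Z/(1-c^2)$ (valid everywhere, including the endpoints $\sin\theta=0$, where one verifies it instead from the $\dot Y$-equation or by continuity). Differentiating once more and substituting $\dot Z=-XY$ together with $XY=\tfrac{k(1+c)}{1-c}\sin 2\theta$ gives
\begin{equation*}
\ddot\theta=\frac{k}{(1-c)^2}\,\sin 2\theta\,.
\end{equation*}
Setting $\phi:=2\theta$ yields $\ddot\phi=\tfrac{2k}{(1-c)^2}\sin\phi$, which, after possibly shifting $\phi$ by $\pi$, is exactly the equation of a simple pendulum. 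As a consistency check one may instead substitute $Z=-(1-c^2)\dot\theta$ into $C_2$ from \eqref{2ellipcyls} (with $\alpha=0$) and recover directly the pendulum energy integral $\tfrac12\dot\phi^2+\tfrac{2k}{(1-c)^2}\cos\phi=\mathrm{const}$, the constant being fixed by the chosen level of $C_2$; differentiating this reproduces the equation above.

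The computation is essentially routine; the only points needing care are the degenerate endpoints of the parametrization, where the naive derivation of $\dot\theta$ divides by a vanishing $\sin\theta$ and one must pass through the $\dot Y$-equation, and the bookkeeping of the constants $c$ and $k$ so that the pendulum frequency comes out correctly. I do not anticipate any genuine obstacle.
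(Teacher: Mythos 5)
Your proof is correct, but it takes a genuinely more elementary route than the paper. The paper keeps $\alpha$ general, passes to elliptic polar coordinates $(r,\theta,p)$ adapted to $C_1$, and uses preservation of the volume form $dX\wedge dY\wedge dZ=(1+c)\,dC_1\wedge d\theta\wedge dp$ to exhibit the restricted dynamics as canonically Hamiltonian on each level set of $C_1$ with Hamiltonian $C_2$ and bracket $\{\theta,p\}=1/(1+c)$; the pendulum equation then follows from the canonical equations after setting $\alpha=0$. That extra machinery buys the general-$\alpha$ equation \eqref{pringle-motion}, which the paper reuses in the subsequent remark about motion along the ``pringle'' intersections, and it makes the symplectic structure on the level sets explicit. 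You instead set $\alpha=0$ at the outset, parametrize the $C_1$-ellipse directly, read off $\dot\theta=-Z/(1-c^2)$ from the flow (correctly handling the points where $\sin\theta=0$ via the $\dot Y$-equation), and differentiate once more; $C_2$ enters only as an energy-integral consistency check. This is shorter and avoids the volume-form bookkeeping, at the cost of not producing the general-$\alpha$ reduction. Two harmless discrepancies with the paper: your angle differs from the paper's by a quarter-period (whence the sign of $\sin 2\theta$), and your frequency constant $2k/(1-c)^2$ differs from the constant displayed in \eqref{simplepend-motion}; a direct recomputation supports your value, the paper's constant tracing back to the coefficient of the transformed $C_2$ in \eqref{C1C2-def}, and in any case the discrepancy is immaterial to the statement, since both forms are simple pendulum equations in $2\theta$.
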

\begin{proof} For the proof, let us first consider the case where the offset parameter $\alpha$ does not necessarily vanish. Let us define elliptical polar coordinate variables $r$, $\theta$ and $p$, given by
\begin{align}
\begin{split}
&X:=\frac{ r\sin\theta-\alpha}{1-c}\,, \\
&Y:=  r\cos\theta(1+c) \,, \\
&Z:= p(1-c) \,, \\
\end{split}
\label{ellipcoords}
\end{align}
so $C_1$ and $C_2$ transform to
\begin{align}
C_1 = \frac{r^2}{2}  \quad\hbox{and}\quad
C_2= \frac{(1+c)}{2(1-c)}\left( r(1+c) \sin\theta - 2\alpha \right)^2 + \frac{p^2}{2} \,.
\label{C1C2-def}
\end{align}
We restrict the motion to a level set of $C_1$, specified as $2C_1=r^2 = const$. On this level set, the transformation of coordinates in \eqref{ellipcoords} yields
\begin{align}
\begin{split}
&dX= \frac{1}{1-c} ( \sin\theta \,dr + r \cos\theta \,d\theta ),  \\
&dY= (1+c) ( \cos\theta\, dr - r \sin\theta \,d\theta),  \\
&dZ= (1-c)\,dp  
\,. 
\end{split}
\end{align}
The preserved volume form $dX \wedge dY \wedge dZ$ may then be written as
\begin{align}
dX \wedge dY \wedge dZ &= (1+c)\,d\left(\frac{r^2}{2}\right) \wedge d\theta \wedge dp 
= (1+c)\,dC_1 \wedge d\theta \wedge dp 
\,.
\end{align}
This means that the motion on a level set of $C_1$ is canonically Hamiltonian. We choose $C_2$ to be our Hamiltonian and restrict to motion on $C_1$. For a fixed value of $C_1$, the function $\{F,C_2\}$ only depends on $(\theta,p)$, so motions on level sets of $C_1$ are given as
\begin{equation*}
 \{F,C_2\} dX \wedge dY \wedge dZ = dC_1 \wedge  \{F,C_2\}\,(1+c) \, d\theta \wedge dp \,.
\end{equation*}
Hence, motion on the level set $C_1$ is given by the symplectic Poisson bracket,
\begin{equation*}
\{F,C_2\}=\frac{1}{1+c} \left( \frac{\partial F}{\partial \theta} \frac{\partial C_2}{\partial p} - \frac{\partial C_2}{\partial \theta} \frac{\partial F}{\partial p} \right) 
,
\end{equation*}
for which we see
\begin{equation*} \{\theta,p\} = \frac{1}{1+c}\,.\end{equation*} 
When $C_2$ in \eqref{C1C2-def} is chosen as the Hamiltonian, the canonical equations of motion are given by
\begin{align}
\begin{split}
&\frac{d\theta}{d\tau} =  \{\theta,C_2\} = \frac{1}{1+c}  \frac{\partial C_2}{\partial p} = \frac{p}{1+c} \\
&\frac{dp}{d\tau} =  \{p,C_2\} = -\frac{1}{1+c}  \frac{\partial C_2}{\partial \theta} = -\frac{r\cos \theta(1+c)}{(1-c)}(r\sin\theta \,(1+c) - 2\alpha) 
\end{split}
\end{align}
Combining these equations to eliminate $p$ yields
\begin{equation}
\frac{d^2\theta}{d\tau^2} = \frac{1}{(1+c)} \frac{dp}{d\tau} = -\frac{r\cos \theta}{(1-c)}(r\sin\theta \,(1+c) - 2\alpha) \,.
\label{pringle-motion}
\end{equation}
In analogy to rigid body motion \cite{Ho2011}, setting $\alpha=0$ in \eqref{pringle-motion} now yields 
\begin{equation}
\frac{d^2\theta}{d\tau^2} = -\,\frac{C_1(1+c)  }{2(1-c)} \,\sin 2\theta \,,
\label{simplepend-motion}
\end{equation}
which is the equation for a simple pendulum.

\end{proof}

\begin{remark}\rm
For arbitrary values of the parameter $\alpha$, equation \eqref{pringle-motion} describes the motion in $\mathbb{R}^3$ along the `pringle' shape formed at the intersection of the orthogonal elliptic cylinders that represent to two quadratic conserved quantities, as shown in Figure \ref{fig1}.
\end{remark}

\paragraph{Lax pair for two-particle T-Strand travelling waves}$\,$\\
We make one final remark about the two-particle T-Strand travelling waves, in the form of a theorem.
\begin{theorem}\rm
The two-particle T-Strand travelling wave equations may be cast as a Lax pair. 
\end{theorem}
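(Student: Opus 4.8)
The plan is to produce an explicit $2\times 2$ (or, if convenient, larger) matrix Lax pair with spectral parameter for the reduced travelling-wave system \eqref{RedTWS}, and then pull this back through the invertible linear transformations to \eqref{TWSLag-eqns}. The key observation driving the construction is that \eqref{RedTWS} is, after the elliptic-coordinate reduction of the previous proposition, the simple-pendulum / pendulum-with-offset equation \eqref{pringle-motion}, and pendulum-type equations are classically known to admit a Lax representation (they are the $\mathfrak{sl}(2)$ reduction of the sine-Gordon travelling waves, or equivalently the Euler top on a coadjoint orbit). So my first step would be to rewrite \eqref{RedTWS} in the form $\dot{\mathbf X}=\nabla C_1\times\nabla C_2$ from \eqref{TW-vec-eqn}, recognising the right-hand side as coadjoint-type motion with two conserved quadratic functions $C_1,C_2$ given in \eqref{2ellipcyls}.

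Concretely, I would seek matrices $\mathcal L(\tau;\lambda)$ and $\mathcal N(\tau;\lambda)$, polynomial in a spectral parameter $\lambda$, such that $d\mathcal L/d\tau=[\mathcal N,\mathcal L]$ is equivalent to \eqref{RedTWS}. The natural ansatz is to take $\mathcal L$ linear in $\lambda$ with the leading coefficient a constant diagonal (or symmetric) matrix encoding the anisotropy $1-c^2$, and the $\lambda^0$ term built linearly from $(X,Y,Z)$; the $\mathrm{tr}\,\mathcal L^2$ invariants, expanded in powers of $\lambda$, should then reproduce $C_1$, $C_2$ and the Casimir/offset constant $\alpha$. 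I would fix the entries of $\mathcal N$ by demanding that the $\lambda^1$ part of $[\mathcal N,\mathcal L]$ vanish and the $\lambda^0$ part match $\dot{\mathbf X}$; this is a short linear algebra computation given the known Poisson structure. Once the Lax pair for \eqref{RedTWS} is in hand, the final step is purely formal: since \eqref{RedTWS} was obtained from \eqref{TWSLag-eqns} by eliminating the three linear conserved quantities $d_1,d_2,d_3$ in \eqref{TWSConstants} followed by invertible linear transformations, I substitute those linear relations back so that $\mathcal L,\mathcal N$ become explicit functions of $(\xi_1,\xi_2,\xi_3,\eta_1,\eta_2,\eta_3)$ and the constants, giving the Lax pair for the two-particle T-Strand travelling wave equations as claimed.

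The main obstacle I anticipate is finding the correct placement of the spectral parameter and the constant background matrix so that the two independent quadratic invariants $C_1$ and $C_2$ of \eqref{2ellipcyls} — which are offset (centres displaced by $\alpha$) and carry the factors $(1\pm c)^{\pm 2}$ — emerge simultaneously from a single $\mathrm{tr}\,\mathcal L(\lambda)^2$. Because the two elliptic cylinders are \emph{orthogonally aligned but off-centre}, a naive $\mathfrak{so}(3)$/Euler-top Lax matrix will not immediately split the invariants correctly; I expect to need the affine (loop-algebra) version, with the background term breaking the symmetry so that the $O(\alpha)$ cross terms land in the right coefficient of $\lambda$. A secondary, more bookkeeping-level difficulty is that the passage back from \eqref{RedTWS} to \eqref{TWSLag-eqns} involves a composition of several linear changes of variables that are only sketched in the text (``a series of linear transformations''), so tracking the dependence on $c$ and on $d_1,d_2,d_3$ through to the final $6$-variable form requires care, though no genuine difficulty. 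If the explicit loop-algebra ansatz proves cumbersome, a fallback is to exhibit the Lax pair directly for the pendulum form \eqref{pringle-motion} (where it is standard) and note that the reductions in the previous proposition are all invertible; this still establishes the theorem.
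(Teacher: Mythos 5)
There is a genuine gap: you never actually exhibit a Lax pair. Your main route is an ansatz --- a loop-algebra matrix $\mathcal L(\lambda)$ whose invariants $\operatorname{tr}\mathcal L(\lambda)^2$ should reproduce $C_1$, $C_2$ and $\alpha$ --- but you yourself flag that the off-centre, anisotropically weighted cylinders in \eqref{2ellipcyls} obstruct the naive construction, and you leave that obstruction unresolved ("I expect to need the affine version\ldots"). A proof of the theorem must display the matrices and verify the commutator equation; a plan whose central computation is acknowledged to be problematic does not do this. The fallback is also incomplete: the pendulum reduction of the previous proposition lives on a fixed level set of $C_1$ and literally gives the simple pendulum only for $\alpha=0$; for general $\alpha$ equation \eqref{pringle-motion} is an offset pendulum, for which you do not write a Lax pair either, and even granted one on each symplectic leaf you would still need to assemble it into a Lax representation of the full three-dimensional system \eqref{RedTWS} before pulling back through the linear changes of variables.

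What you miss is that the theorem needs far less machinery, precisely because of the structure $\dot{\mathbf X}=\nabla C_1\times\nabla C_2$ in \eqref{TW-vec-eqn} that you correctly identify as the starting point. Since $dC_2\wedge dC_2=0$, one may replace $C_1$ in $dC_1\wedge dC_2$ by the combination $(1+c)^2C_1+(1-c^2)^2C_2$, in which the $\alpha$-offsets cancel and which is a \emph{centred} quadratic form, $\tfrac12\big(2(1-c^2)^2X^2+Y^2+Z^2\big)$ up to a constant. Rescaling $X$ and the time variable then turns this combination into $\tfrac12|\widetilde{\mathbf X}|^2$, so the system becomes rigid-body--like, $d\widetilde{\mathbf X}/d\sigma=\widetilde{\mathbf X}\times\widetilde\nabla C_2(\widetilde{\mathbf X})$ as in \eqref{X-tilde-vecdot}, and the $\mathfrak{so}(3)$ hat map immediately converts this into the commutator form $d\widehat X/d\sigma=[\widehat X,\widehat Y]$ of \eqref{LPR-ep}. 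No spectral parameter is required (the claimed Lax pair is of Euler-top type, with $\widehat Y$ depending on $\widehat X$), so your proposal aims at a stronger object than the theorem asserts while not completing the weaker one.
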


\begin{proof}
We write the travelling wave equations \eqref{TW-vec-eqn} as a 2-form equation 
\begin{align}
\begin{split}
\frac{d\mathbf{X}}{d\tau}\cdot d\mathbf{S}  &= d C_1 \wedge d C_2 
\\&= (1+c)^{-2}d\big( (1+c)^2C_1 + (1-c^2)^2C_2 \big)\wedge d C_2
\\&= \frac{1}{2}(1+c)^{-2}\big( 2(1-c^2)^2XdX + YdY + ZdZ \big)
\\& \hspace{1cm}\wedge d \left((1-c^2)^2\left(X-\frac{\alpha}{1-c}\right)dX 
+ \frac{1}{(1-c)^2} ZdZ \right)
\,,
\end{split}
\label{2formTWcalc}
\end{align}
Defining a new time variable $d\tilde{\tau} = d\tau/(1+c)^2$ and rescaling $X$ to $\widetilde{X}=\sqrt{2}\,(1-c)^2X$ yields
\begin{align}
\mathbf{\widetilde{X}} = (\tilde{X},Y,Z)^T
\quad\hbox{and}\quad
\frac{d\mathbf{\widetilde{X}}}{d\tilde{\tau}}\cdot d\mathbf{\widetilde{S}} 
= 
\sqrt{2}\,(1-c^2)^2\,\frac{d\mathbf{X}}{d\tau}\cdot d\mathbf{S} 
\label{X-tilde-vec}
\end{align}
Consequently, we may write with $\sigma=\sqrt{2}\,(1-c^2)^2\tilde{\tau}$,
\begin{align}
\frac{1}{\sqrt{2}\,(1-c^2)^2}
\frac{d\mathbf{\widetilde{X}}}{d\tilde{\tau}}\cdot d\mathbf{\widetilde{S}} 
=
\frac{d\mathbf{\widetilde{X}}}{d\sigma}\cdot d\mathbf{\widetilde{S}} 
=
\mathbf{\widetilde{X}}\cdot d\mathbf{\widetilde{X}}
\wedge dC_2(\mathbf{\widetilde{X}})
\label{X-tilde-dot}
\end{align}
which in $\mathbf{R}^3$ vector form becomes 
\begin{align}
\frac{d\mathbf{\widetilde{X}}}{d\sigma} 
=
\mathbf{\widetilde{X}}\times
 \widetilde{\nabla}C_2(\mathbf{\widetilde{X}})
\label{X-tilde-vecdot}
\end{align}
Identifying $\mathbf{\widetilde{X}}$ and $ \widetilde{\nabla}C_2$ with the $3\times3$ skew symmetric matrices by the hat map expressions $\widehat{X}_{jk}=-\epsilon_{jki}\widetilde{X}_i$ and  $\widehat{Y}_{jk}=-\epsilon_{jki}\widetilde{\nabla}_iC_2$ then transforms the vector equation \eqref{X-tilde-vecdot} for T-Strand travelling wave motion into Lax commutator form
\begin{align}
\frac{d\widehat{X}}{dt} =  [\widehat{X},\,\widehat{Y}] \,.
\label{LPR-ep}
\end{align}
This is the Lax pair for two-particle T-Strand travelling waves.
\end{proof}

\section{Zero curvature representation (ZCR)}\label{ZCR-test}
Soon after Lax wrote the KdV equation in commutator form in \cite{Lax68}, people realized that the Lax-pair representation of integrable systems is equivalent to a zero curvature representation (a zero commutator of two operators), as in 
\begin{equation} 
\partial_t L - \partial_s M = [L,M] 
\label{ZCR-eqn-intro1}
\,.\end{equation}
In equation \eqref{ZCR-eqn-intro1}, one sees two separate Lax pairs (one in $t$ and another in $s$) whose $(s,t)$ dependence on both independent variables is linked together by the conditions imposed by the ZCR. 
In one of the first applications of the ZCR approach, Zakharov and Manakov \cite{ZaMa1973} developed the inverse scattering solution for the three-wave equation which has a $3 \times 3$ matrix ZCR representation, where both operators $L$ and $M$ are linear in the spectral parameter $\lambda$. Manakov used a particular $s$-independent case of the same ZCR \cite{Ma1976} to find the integrability conditions for the ordinary differential equations (ODEs) describing the motion of a rigid body in $n$ dimensions. The Lax matrices $L$ and $M$ that we study for T-Strands are quadratic in $\lambda$. Lax operators that are polynomial (quadratic) in $\lambda$  were first used in studies of the Thirring model in \cite{KuMi77} and the derivative nonlinear Schr\"odinger (DNLS) equation in \cite{KaNe78}. The inverse scattering method (ISM) for Lax operators that are quadratic in $\lambda$ was developed in \cite{GeIvKu80} and recently a Riemann-Hilbert formulation of the inverse-scattering problem for ZCR operators that are polynomial in $\lambda$ was developed in \cite{Ge12}.

\subsection{The ZCR test for integrability of the two-particle T-Strand} 
Based on the forms of the Lax pair matrices for the Toda lattice in \eqref{LB-defs} and the symmetric form of the two-particle Toda lattice G-Strand system \eqref{Ham-eqns} one may prove the following. 
\begin{proposition}\rm
The two-particle Toda lattice G-Strand equations \eqref{Ham-eqns} possess a zero curvature representation,
\begin{equation}
\partial_t M_1 + \left[M_2,M_1\right]  -\partial_s N_1 - \left[N_2,N_1\right] = 0
\,,\label{EP-eqn}
\end{equation}
where
\begin{equation} 
M_1:= \begin{pmatrix} \xi_1 & \xi_3 \\ \xi_3 & \xi_2 \end{pmatrix}
\quad 
M_2:= \begin{pmatrix} \xi_1 & 0 \\ 2\xi_3  &  \xi_2 \end{pmatrix} 
\quad 
N_1:= \begin{pmatrix} \e_1 & \frac{1}{2}\e_3 \\ \frac{1}{2}\e_3 & \e_2 \end{pmatrix}
\quad 
N_2:= \begin{pmatrix} \e_1 & 0 \\ \e_3 &\e_2 \end{pmatrix} 
\label{Lax-matrices}
.\end{equation}
\end{proposition}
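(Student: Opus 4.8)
The plan is to verify the zero curvature representation \eqref{EP-eqn} by direct substitution, splitting the computation into the $t$-part and the $s$-part because the two halves of \eqref{EP-eqn} are structurally identical under the substitution $(M_1,M_2,\xi_i)\leftrightarrow(N_1,N_2,\e_i)$ together with the rescaling $\xi_3\mapsto\e_3$, $2\xi_3\mapsto\e_3$ encoded in the matrices \eqref{Lax-matrices}. So first I would compute $[M_2,M_1]$ explicitly: both are $2\times2$, with $M_2$ lower triangular and $M_1$ symmetric, so the commutator $M_2M_1-M_1M_2$ is a small $2\times2$ matrix whose entries are quadratic in $\xi_1,\xi_2,\xi_3$. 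I expect the diagonal entries to come out as $\pm(2\xi_3^2-\text{something})$ and the off-diagonal entries to involve $\xi_3(\xi_1-\xi_2)$; the precise bookkeeping is routine but must be done carefully, since the whole point is that these quadratic terms are exactly the nonlinearities appearing on the right-hand side of \eqref{Ham-eqns}.

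Next I would add $\partial_t M_1$, which is simply the matrix with entries $\partial_t\xi_1$, $\partial_t\xi_3$ on one off-diagonal slot, $\partial_t\xi_2$, etc., and collect the $2\times2$ matrix equation $\partial_t M_1+[M_2,M_1]=\partial_s N_1+[N_2,N_1]$ entry by entry. This produces four scalar equations (one per matrix entry, with the two off-diagonal entries of each symmetric matrix $M_1,N_1$ agreeing, so really three independent ones from the symmetric part plus possibly one consistency condition). I would then match these against the six component equations \eqref{Ham-eqns}: the $(1,1)$ entry should reproduce $\partial_t\mu_1-\partial_s\e_1=2\mu_3^2-\tfrac12\e_3^2$ after using the Legendre relations $\mu_1=\xi_1$, $\mu_2=\xi_2$, $\mu_3=\tfrac12\xi_3$ from \eqref{TS-LegXform}; the $(2,2)$ entry should reproduce the $\partial_t\mu_2$ equation; and the $(1,2)=(2,1)$ entry should reproduce the $\partial_t\mu_3$ equation, with the factor-of-two discrepancies between $M_1$ and $M_2$ (and between $N_1$ and $N_2$) being precisely what is needed to make the coefficients match. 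The remaining three equations of \eqref{Ham-eqns} — the $\partial_t\e_i$ equations — are the compatibility/auxiliary equations, and I would check they emerge either from an off-diagonal consistency condition of the matrix identity or are automatically consistent with the definitions $\xi=g^{-1}\partial_t g$, $\eta=g^{-1}\partial_s g$.

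The main obstacle I anticipate is not conceptual but combinatorial: getting the constant coefficients exactly right. The matrices \eqref{Lax-matrices} are deliberately asymmetric between the ``$1$'' versions ($M_1,N_1$, which carry $\xi_3$ and $\tfrac12\e_3$) and the ``$2$'' versions ($M_2,N_2$, which carry $2\xi_3$ and $\e_3$), and this asymmetry is exactly tuned so that $[M_2,M_1]$ yields the coefficient $2$ in front of $\mu_3^2$ in the first equation of \eqref{Ham-eqns} while $[N_2,N_1]$ yields the coefficient $\tfrac12$ in front of $\e_3^2$. If I mis-track a factor of two anywhere the identity will fail, so I would double-check by independently recomputing $[M_2,M_1]$ as $-[M_1,M_2]$ and comparing. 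A secondary subtlety is the sign convention: \eqref{EP-eqn} has $+[M_2,M_1]$ and $-[N_2,N_1]$, which differs from the naive ZCR form \eqref{ZCR-eqn-intro1}; I would confirm this sign is consistent with the way the auxiliary equation \eqref{aux-eqn-2time} enters, i.e. that the $s$-evolution pair $(N_1,N_2)$ plays the role of the ``$M$'' operator with the opposite orientation. Once the coefficient and sign bookkeeping checks out, the proposition follows immediately, since the matrix identity \eqref{EP-eqn} is then literally equivalent, entry by entry, to the system \eqref{Ham-eqns}.
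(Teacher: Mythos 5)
Your overall method---verify \eqref{EP-eqn} entry by entry and match it against \eqref{Ham-eqns}---is exactly the (implicit) direct calculation the paper intends, but one step of your plan would derail the computation: the variable dictionary. You propose to compare the entries of \eqref{EP-eqn} with \eqref{Ham-eqns} via the Legendre relations $\mu_1=\xi_1$, $\mu_2=\xi_2$, $\mu_3=\tfrac12\xi_3$ from \eqref{TS-LegXform}, i.e.\ reading the $\xi_i$ in \eqref{Lax-matrices} as velocity variables. With that substitution the identity fails, not by a mis-tracked factor of two in the algebra but structurally: one finds
\begin{equation*}
[M_2,M_1]=\begin{pmatrix}-2\xi_3^2 & \xi_3(\xi_1-\xi_2)\\ \xi_3(\xi_1-\xi_2) & 2\xi_3^2\end{pmatrix},
\qquad
[N_2,N_1]=\begin{pmatrix}-\tfrac12\eta_3^2 & \tfrac12\eta_3(\eta_1-\eta_2)\\ \tfrac12\eta_3(\eta_1-\eta_2) & \tfrac12\eta_3^2\end{pmatrix},
\end{equation*}
so the $(1,1)$ entry of \eqref{EP-eqn} reads $\partial_t\xi_1-\partial_s\eta_1=2\xi_3^2-\tfrac12\eta_3^2$; putting $\xi_3=2\mu_3$ turns the first term into $8\mu_3^2$ rather than the $2\mu_3^2$ of \eqref{Ham-eqns}, and the $(1,2)$ entry acquires a factor $2$ on $\partial_t\mu_3$ and on $\mu_3(\mu_1-\mu_2)$ but not on the $\eta$-terms, so no overall rescaling repairs it (nor does the velocity system \eqref{hyperbolicLag-eqns} match, since its first right-hand side is $\tfrac12(\xi_3^2-\eta_3^2)$). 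The correct reading is the paper's recycled notation introduced after \eqref{2timeEP2} and used in \eqref{Gstrand-EP-comp}: in \eqref{Lax-matrices} the symbols $\xi_1,\xi_2,\xi_3$ \emph{are} the momentum variables $\mu_1,\mu_2,\mu_3$ of \eqref{Ham-eqns}, with no Legendre factor, and $\eta_i$ are the $\eta_i$ there. With that identification the three independent entries of the symmetric matrix identity \eqref{EP-eqn} are literally the first three equations of \eqref{Ham-eqns}: the $(1,1)$ and $(2,2)$ entries give $\partial_t\mu_{1,2}-\partial_s\eta_{1,2}=\pm\big(2\mu_3^2-\tfrac12\eta_3^2\big)$, and the $(1,2)=(2,1)$ entry gives $\partial_t\mu_3-\partial_s\tfrac{\eta_3}{2}=-\mu_3(\mu_1-\mu_2)+\tfrac{\eta_3}{2}(\eta_1-\eta_2)$.

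A second, smaller point: the three $\partial_t\eta_i$ equations of \eqref{Ham-eqns} do not ``emerge from an off-diagonal consistency condition'' of \eqref{EP-eqn}, and you should not look for them there. Every term in \eqref{EP-eqn} is a symmetric $2\times2$ matrix, so it contains exactly three scalar equations and nothing more. The compatibility half of the system is encoded separately in the lower-triangular relation \eqref{compat-eqn} of the following proposition, $\partial_t N_2-\partial_s M_2+[M_2,N_2]=0$, whose $(1,1)$, $(2,2)$ and $(2,1)$ entries reproduce $\partial_t\eta_{1,2}-\partial_s\mu_{1,2}=0$ and the $\partial_t\tfrac{\eta_3}{2}$ equation. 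The present proposition asserts only the Euler--Poincar\'e half, so once the variable identification is corrected your entry-by-entry computation does close the proof.
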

\begin{remark}\rm
Equations \eqref{EP-eqn}--\eqref{Lax-matrices} recover two copies of the Lax pair for the ODE formulation, appearing in $L$ and $M$, if for $M_2$ and $N_2$ we take the matrix differences
\begin{equation*} 
M_2 - M_1 = \begin{pmatrix} 0 & -\xi_3 \\ \xi_3 & 0 \end{pmatrix} 
\quad \hbox{and} \quad
N_2 - N_1 = \begin{pmatrix} 0 & -\,\frac{1}{2}\e_3 \\ \frac{1}{2}\e_3 & 0 
\end{pmatrix}.
\end{equation*}
This recovers antisymmetry of the Lax matrix $M$ for the Toda ODEs. 
Substituting these matrix differences does not affect the reduction to the matrix Lax pair ODEs, but it does affect the additional compatibility equation that arises for the T-Strands. Of course, it  is absent for the original Toda lattice. This is the key step in writing the ZCR for the T-Strands so that it recovers the Lax pair for the Toda ODEs. 
\end{remark}

\begin{proposition} [Lax pair for the compatibility condition]\rm $\,$
\\
It follows from the definitions of the Lax matrices in \eqref{Lax-matrices} that the compatibility condition for two-particle T-Strands implies,
\begin{equation} 
\partial_t N_2 - \partial_s M_2 + \left[M_2,N_2\right] = 0 
\,.
\label{compat-eqn}
\end{equation}
\end{proposition}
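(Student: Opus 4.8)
The plan is to prove \eqref{compat-eqn} by a direct $2\times 2$ matrix computation, reading off its entries and matching them against the compatibility system of the two-particle T-Strand. Structurally, $\partial_t N_2 - \partial_s M_2 + [M_2,N_2]$ is the curvature of the $\mathfrak{gl}(2)$-connection with time component $M_2$ and space component $N_2$, so \eqref{compat-eqn} is a flatness statement; but rather than look for a conceptual reason for the flatness I would simply verify it componentwise, since the matrices in \eqref{Lax-matrices} are given explicitly.

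First I would exploit the triangular structure. Both $M_2$ and $N_2$ are lower triangular, so $\partial_t N_2$, $\partial_s M_2$, $M_2 N_2$ and $N_2 M_2$ are all lower triangular, and since the diagonal parts of $M_2$ and $N_2$ commute, $[M_2,N_2]$ is \emph{strictly} lower triangular. Hence \eqref{compat-eqn} has a trivial $(1,2)$ entry ($0=0$) and reduces to just three scalar equations: the two diagonal entries and the $(2,1)$ entry.

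Next I would handle the diagonal entries, which receive no contribution from $[M_2,N_2]$: the $(1,1)$ and $(2,2)$ entries of $\partial_t N_2-\partial_s M_2$ reproduce exactly the two compatibility equations with vanishing right-hand sides (the second and third components of $-\,{\rm ad}_\xi\eta$ in \eqref{ad-xi_eta} vanish; these are two of the three scalar relations in \eqref{Gstrand-aux-comp}, equivalently the fourth and fifth lines of \eqref{Ham-eqns}), so they hold by hypothesis. Then I would turn to the $(2,1)$ entry, the only component carrying nonlinear terms: computing $[M_2,N_2]_{21}$ directly from \eqref{Lax-matrices}, it yields precisely the quadratic combination that sits, with the opposite sign, on the right-hand side of the remaining compatibility equation (the last line of \eqref{Ham-eqns}); together with the linear $\partial_t,\partial_s$ part from $\partial_t N_2-\partial_s M_2$, invoking that equation makes the $(2,1)$ entry vanish as well, which finishes the proof.

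The only non-routine step is the last one: one must check that the asymmetric normalizations built into \eqref{Lax-matrices} --- the factor $2$ relating the off-diagonal entry $\xi_3$ of $M_1$ to the lower-left entry $2\xi_3$ of $M_2$, and $\tfrac12\eta_3$ in $N_1$ to $\eta_3$ in $N_2$ --- are tuned so that $[M_2,N_2]_{21}$ reproduces the nonlinear right-hand side of the compatibility equation with exactly the right coefficient and sign. This is the same piece of bookkeeping that makes the commutators $[M_2,M_1]$ and $[N_2,N_1]$ in the ZCR \eqref{EP-eqn} reproduce the quadratic right-hand sides of \eqref{Ham-eqns}; here it is lighter, since $[M_2,N_2]$ is strictly lower triangular and only one scalar equation involves nonlinear terms. (A more conceptual route would be to exhibit $M_2$ and $N_2$ as $O^{-1}\partial_t O$ and $O^{-1}\partial_s O$ for a common $GL(2)$-valued frame $O(t,s)$, in the spirit of the Toda identity $M=O^{-1}\dot O$, which would make flatness automatic; but since $M_2$ and $N_2$ are not antisymmetric this is not immediate, so I expect the direct computation to be the cleaner path.)
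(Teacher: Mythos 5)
Your direct componentwise verification is essentially the paper's own argument, which offers nothing beyond a direct calculation here, and your triangularity observations (strictly lower-triangular commutator, diagonal entries giving the two trivial compatibility laws, all content in the $(2,1)$ entry) organize that calculation correctly. One caveat to make explicit when you do the $(2,1)$ bookkeeping: the entries in \eqref{Lax-matrices} must be read as the Hamiltonian-side variables of \eqref{Ham-eqns}, so that the subdiagonal of $M_2$ is $\delta h/\delta\mu_3=2\mu_3$ and that of $N_2$ is $\eta_3$, in which case the $(2,1)$ entry is exactly twice the last line of \eqref{Ham-eqns} and vanishes; if instead one substitutes literally the velocity variables of \eqref{hyperbolicLag-eqns} (where $\mu_3=\tfrac12\xi_3$), the mismatched normalizations $2\xi_3$ versus $\eta_3$ are no longer the image of $\xi$ and $\eta$ under one and the same embedding, and a residual term $-\partial_s\xi_3+\xi_3(\eta_1-\eta_2)$ survives, so \eqref{compat-eqn} would not follow from the compatibility condition alone.
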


\begin{theorem} \rm [ZCR test for integrability of the two-particle T-Strand]$\,$ \label{ZCR-thm}
\\
The T-Strand system of equations comprising the Euler--Poincar\'e equation \eqref{EP-eqn} and the compatibility condition \eqref{compat-eqn}, admit a zero curvature representation,
\begin{equation} 
\partial_t L - \partial_s M = [L,M] 
\label{ZCR-eqn}
\,,\end{equation}
upon choosing
\begin{equation} 
L:=\lambda^2 A + \lambda M_1 - N_2 
\quad  \hbox{and} \quad
M:=\lambda^2 B - \lambda N_1 + M_2 
\,,
\label{Lax-matricesLM}
\end{equation}
where the Lax matrices $(M_1,M_2,N_1,N_2)$ are defined in \eqref{Lax-matrices}.

\end{theorem}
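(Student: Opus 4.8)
The plan is to verify the zero curvature relation \eqref{ZCR-eqn} head‑on, by substituting the polynomial ansatz \eqref{Lax-matricesLM} and matching coefficients of powers of the spectral parameter $\lambda$. Taking $A$ and $B$ to be constant matrices, one has $\partial_t L - \partial_s M = \lambda(\partial_t M_1 + \partial_s N_1) - (\partial_t N_2 + \partial_s M_2)$, which is of degree one in $\lambda$, whereas $[L,M] = [\lambda^2 A + \lambda M_1 - N_2,\ \lambda^2 B - \lambda N_1 + M_2]$ is of degree four. Hence the residual $\partial_t L - \partial_s M - [L,M]$ decomposes into five matrix equations, one at each order $\lambda^4,\lambda^3,\lambda^2,\lambda^1,\lambda^0$, and the proof amounts to disposing of each of them.

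I would start from the top. At order $\lambda^4$ the coefficient is $[A,B]$, so $A$ and $B$ must be taken to commute; choosing them constant and diagonal settles this. At orders $\lambda^3$ and $\lambda^2$ the coefficients are $[M_1,B]-[A,N_1]$ and $[A,M_2]-[M_1,N_1]-[N_2,B]$ respectively, and here I would exploit the observation recorded above that $M_2-M_1$ and $N_2-N_1$ are antisymmetric, so that $M_1,N_1$ are the symmetric parts of $M_2,N_2$. Using this, each of these two matrix equations reduces to a small number of scalar relations; these relations fix the remaining freedom in the diagonal entries of $A$ and $B$ and, crucially, also impose linear constraints on $(\xi_1,\xi_2,\xi_3,\eta_1,\eta_2,\eta_3)$ — the \emph{determining conditions} for the ZCR. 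This is where the content of the result sits: as the remainder of the paper shows, these same linear relations are precisely the ones for which the quadratic nonlinear terms on the right‑hand side of \eqref{Ham-eqns} vanish identically.

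With $A$ and $B$ pinned down, I would turn to the two lowest orders. Writing $M_2 = M_1 + (M_2-M_1)$ and $N_2 = N_1 + (N_2-N_1)$ and expanding the commutators, the $\lambda^1$ coefficient collapses — after the cancellations produced by the antisymmetric differences — onto the Euler--Poincar\'e matrix equation \eqref{EP-eqn}, and the $\lambda^0$ coefficient onto the compatibility equation \eqref{compat-eqn}, in each case up to the paper's overall sign and the asymmetric normalisation between the $M_i$ and $N_i$ blocks. Since \eqref{EP-eqn} has already been shown equivalent to the two‑particle T‑Strand system and \eqref{compat-eqn} to its compatibility condition, this establishes that the full $\lambda$‑dependent relation \eqref{ZCR-eqn} holds, on the submanifold cut out by the determining conditions of the previous step.

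The main obstacle is the middle of this chain, namely the calibration of the constant matrices $A$ and $B$: they must be chosen so that the $\lambda^4$ order vanishes identically, the $\lambda^3$ and $\lambda^2$ orders reduce to a consistent (nonempty, linear) set of determining relations rather than to a contradiction, and the $\lambda^1$ and $\lambda^0$ orders reproduce exactly the two matrix equations \eqref{EP-eqn} and \eqref{compat-eqn}. Everything downstream of that choice is routine bookkeeping — in particular tracking signs and the factors of $\tfrac12$ that enter through the Lagrangian \eqref{TS-Lag} and the Legendre relations \eqref{TS-LegXform}.
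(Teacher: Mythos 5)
Your proposal follows essentially the same route as the paper: expand \eqref{ZCR-eqn} in powers of $\lambda$, kill the $\lambda^4$ term by taking $A,B$ constant and diagonal, treat the $\lambda^3$ and $\lambda^2$ orders as the determining conditions, and identify the $\lambda^1$ and $\lambda^0$ orders with the Euler--Poincar\'e equation \eqref{EP-eqn} and the compatibility condition \eqref{compat-eqn}. The only (minor) difference is one of emphasis: in the theorem's own proof the paper solves the $\lambda^3$ and $\lambda^2$ equations abstractly for the diagnostic matrices $N_1$ and $M_2$ via ${\rm ad}_A^{-1}{\rm ad}_B$, deferring the explicit linear relations on $(\xi,\eta)$ to the following subsection, whereas you extract those scalar determining relations directly — both are consistent with the paper's argument.
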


\begin{proof} 
Inserting the definitions for L and M, then equating coefficients of powers of $\lambda$ yields:
\begin{align}
\begin{split}
\lambda^4 &: [A,B]=0 \\
\lambda^3 &: [A,N_1]+[B,M_1]=0 \\
\lambda^2 &: [A,M_2]+[B,N_2]-[M_1,N_1]=0 \\
\lambda^1 &: \partial_t M_1 -\partial_s N_1 + \left[M_2,M_1\right] - \left[N_2,N_1\right] = 0 \\
\lambda^0 &:  \partial_t N_2 - \partial_s M_2 + \left[M_2,N_2\right] = 0
\end{split}
\label{constraint-rels}
\end{align}
We may now solve the determining equations \eqref{constraint-rels} and extract the conditions under which the two-particle Toda G-Strand will possess a zero curvature representation (Lax pair) and thus be an integrable system. 
\begin{itemize}
\item
The equations at order $\lambda^0$ and $\lambda^1$ in the system \eqref{constraint-rels} recover, respectively, the compatibility condition and Euler-Poincar\'e equations. These equations determine the time evolution of $N_2$ and $M_1$ in $L$, which are \emph{prognostic} variables. In contrast, the matrices $A$ and $B$, as well as the \emph{diagnostic} variables $N_1$ and $M_2$ in $M$ must be determined from the ZCR conditions at orders $\lambda^2$, $\lambda^3$ and $\lambda^4$.
\item
The equation at order $\lambda^4$ may be satisfied by choosing the matrices $A$ and $B$ to be constant elements of $SO(2)$; that is, by setting:
\begin{equation} 
A = \begin{pmatrix} a_1 & 0 \\ 0 & a_2 \end{pmatrix} 
\qquad  
B = \begin{pmatrix} b_1 & 0 \\ 0 & b_2 \end{pmatrix}
 \,, \end{equation}
where these constant matrix entries should not be confused with Flaschka's variables appearing in the earlier sections of the paper.
\item
The order $\lambda^3$ equation gives a linear relation that determines the diagnostic variable$N_1$ in terms of the prognostic variable $M_1$, namely,
\begin{align}
N_1 = -\, {\rm ad}_A^{-1}{\rm ad}_B M_1
\label{lambda3-ad}
\end{align}
\item
The order $\lambda^2$ equation gives a formula for the diagnostic variable $M_2$ in terms of the the prognostic variables $N_2$ and $M_1$, namely,
\begin{align}
M_2 
= -\,{\rm ad}_A^{-1}\left( {\rm ad}_B N_2 - {\rm ad}_{M_1}N_1\right)
= -\,{\rm ad}_A^{-1}\left( {\rm ad}_B N_2 
+ {\rm ad}_{M_1}{\rm ad}_A^{-1}{\rm ad}_B M_1\right)
.\label{lambda2-ad}
\end{align}
\end{itemize}
Thus,  the determining equations \eqref{constraint-rels} may be solved to express the diagnostic variables ($N_1,M_2$) contained in $M$,  in terms of the prognostic ones ($N_2,M_1$) contained in $L$. 

Consequently, the $G$-Strand system of equations \eqref{EP-eqn}--\eqref{compat-eqn} may be written as a zero curvature representation (or Lax pair) as in \eqref{ZCR-eqn}, by taking the matrices $L$ and $M$ as
\begin{align} 
\begin{split}
&L=\lambda^2 A + \lambda M_1 - N_2
\,,\\&
M = \lambda^2 B 
+ \lambda\, {\rm ad}_A^{-1}{\rm ad}_B M_1 
-  {\rm ad}_A^{-1}\left( {\rm ad}_B N_2 
+ {\rm ad}_{M_1}{\rm ad}_A^{-1}{\rm ad}_BM_1\right) 
.\end{split}
\label{LM-ops}
\end{align}
\end{proof}

\subsection{A direct solution for the ZCR of the two-particle T-Strand}
The proof of Theorem \ref{ZCR-thm} shows that the relations \eqref{lambda3-ad}--\eqref{lambda2-ad} allow the G-Strand system of equations \eqref{EP-eqn}--\eqref{compat-eqn} to be written as a zero curvature representation (or Lax pair), as in \eqref{ZCR-eqn}, with the matrices $L$ and $M$ given in \eqref{LM-ops}. 
In the present case, the $2\times2$ matrices involved are simple enough to allow a direct solution of the determining equations \eqref{constraint-rels}.  For this case, the order $\lambda^3$  equation is satisfied, provided the following  relation holds
\begin{align}
\begin{split}
\frac{\eta_3}{2} &= -\,\frac{b_1-b_2}{a_1-a_2}\,\xi_3
\,.
\label{lambda3-rel}
\end{split}
\end{align}
After a short calculation, one finds that the order $\lambda^2$ equation is satisfied, provided the additional two relations hold 
\begin{align}
\begin{split}
\frac{\eta_3}{2} &= -\,\frac{a_1-a_2}{b_1-b_2}\,\xi_3
\,,\\
\xi_3(\eta_1-\eta_2)&= \frac{\eta_3}{2} (\xi_1-\xi_2)
\,.
\label{lambda2-rel}
\end{split}
\end{align}
Upon rearrangement, equations \eqref{lambda3-rel}--\eqref{lambda2-rel} combine into 
\begin{align}
\begin{split}
\left(\frac{a_1-a_2}{b_1-b_2}\right)^2 &= 1\,.
\label{lambda2+3-rel}
\end{split}
\end{align}
for nonzero $\eta_3$ and $\xi_3$. Hence, equations \eqref{lambda3-rel}--\eqref{lambda2-rel} imply
\begin{align}
\begin{split}
\frac{\eta_3}{2} &= \pm \xi_3
\,,\\
\eta_1-\eta_2  &= \pm (\xi_1-\xi_2)
\,,\\
&\hbox{since}\quad
\frac{a_1-a_2}{b_1-b_2} =  \pm1
\,.
\end{split}
\label{3lin-rel}
\end{align}
Thus, the two-particle Toda $G$-strand has a zero curvature representation (or Lax pair) and is  completely integrable, provided the conditions \eqref{3lin-rel} hold.

In this case, we may use the two linear relations in \eqref{3lin-rel} among the six variables $\{\eta_1,\eta_2,\eta_3,\xi_1,\xi_2,\xi_3\}$ to simplify the original equations \eqref{Ham-eqns}.

\paragraph{The ZCR test for integrability}
By reordering equations \eqref{Ham-eqns} we may write them as three pairs of nonlinear wave equations in characteristic form,
\begin{align}
\begin{split}
& \partial_t(\e_1+\e_2) - \partial_s(\xi_1+\xi_2) = 0 \,, \\
& \partial_t(\xi_1 + \xi_2) - \partial_s(\e_1 + \e_2) =0 \,, \\
& \partial_t(\e_1-\e_2) - \partial_s(\xi_1-\xi_2) = 0 \,, \\
& \partial_t(\xi_1-\xi_2) - \partial_s(\e_1 - \e_2) = 4 \xi_3^2 - \e_3^2 \,, \\
& \partial_t\xi_3 -  \partial_s\frac{\e_3}{2}  = 
- \xi_3(\xi_1 - \xi_2) \left[ 1 - \frac{\e_3}{2\xi_3}\frac{\e_1 - \e_2}{\xi_1 - \xi_2} \right]
, \\
& \partial_t\frac{\e_3}{2} - \partial_s\xi_3 =
 -\xi_3(\xi_1 - \xi_2) \left[ \frac{\e_1 - \e_2}{\xi_1 - \xi_2} - \frac{\e_3}{2\xi_3}\right]
.
\end{split}
\label{Ham-eqns1}
\end{align}
Under the linear relations imposed by the integrability conditions \eqref{3lin-rel}, we see that the nonlinearities on the right-hand sides of equations \eqref{Ham-eqns1} all \emph{vanish} and the system becomes \emph{linear}.

This result has proved the following theorem.
\begin{theorem}[Cancellation of the nonlinearity under the ZCR test for integrability]\rm$\,$\\
The determining equations in \eqref{3lin-rel} provide the submanifold on which the two-particle T-Strand PDE system \eqref{Ham-eqns1} in 1+1 dimensions admits a ZCR. These (linear) determining equations imply that the (quadratic) nonlinear terms \emph{vanish} in the T-Strand PDEs.
\end{theorem}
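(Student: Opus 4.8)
The plan is to treat the theorem in two parts, mirroring the two clauses in its statement. The first clause — that the relations \eqref{3lin-rel} cut out exactly the submanifold on which the ZCR \eqref{ZCR-eqn} holds — has in effect already been obtained in the direct solution of the determining equations \eqref{constraint-rels}: the order $\lambda^3$ equation forces \eqref{lambda3-rel}, a short calculation at order $\lambda^2$ forces \eqref{lambda2-rel}, and together (for $\xi_3,\eta_3\neq 0$) these collapse to \eqref{lambda2+3-rel} and hence to the three linear relations \eqref{3lin-rel}. For the write-up I would simply recall this reduction and then isolate the one feature that makes the second clause work: the two $\pm$ signs in \eqref{3lin-rel} are \emph{not} independent. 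Writing $k:=(b_1-b_2)/(a_1-a_2)$, equation \eqref{lambda3-rel} gives $\eta_3/2=-k\xi_3$ while the first line of \eqref{lambda2-rel} gives $\eta_3/2=-k^{-1}\xi_3$, so $k^2=1$ and $\eta_3/2=\mp\xi_3$; feeding this into the second line of \eqref{lambda2-rel} gives $\xi_3(\eta_1-\eta_2)=\mp\,\xi_3(\xi_1-\xi_2)$, i.e. $\eta_1-\eta_2=\mp(\xi_1-\xi_2)$ with the \emph{same} sign as in $\eta_3/2=\mp\xi_3$.

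For the second clause I would start from the reordered form \eqref{Ham-eqns1} of the T-Strand PDEs, in which the only terms that are not already linear sit on the right-hand sides of the fourth, fifth and sixth equations: the term $4\xi_3^2-\eta_3^2$ in $\partial_t(\xi_1-\xi_2)-\partial_s(\eta_1-\eta_2)$, and the two bracketed factors $1-\tfrac{\eta_3}{2\xi_3}\tfrac{\eta_1-\eta_2}{\xi_1-\xi_2}$ and $\tfrac{\eta_1-\eta_2}{\xi_1-\xi_2}-\tfrac{\eta_3}{2\xi_3}$ multiplying $-\xi_3(\xi_1-\xi_2)$ in the two $\xi_3$-equations. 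Substituting $\eta_3/2=\pm\xi_3$ gives $\eta_3^2=4\xi_3^2$, so the first nonlinearity vanishes identically. Substituting both relations \eqref{3lin-rel} with their common sign gives $\tfrac{\eta_3}{2\xi_3}=\pm 1$ and $\tfrac{\eta_1-\eta_2}{\xi_1-\xi_2}=\pm 1$ with the same sign, so the first bracket equals $1-(\pm1)^2=0$ and the second equals $(\pm1)-(\pm1)=0$. Hence every nonlinear right-hand side in \eqref{Ham-eqns1} vanishes on the submanifold, while the remaining three pairs of equations are already the linear wave equations $\partial_t(\cdot)-\partial_s(\cdot)=0$; thus the full system is linear there, which is the assertion of the theorem.

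The only genuinely delicate point, and the step I would be careful to spell out, is the degenerate locus where the ratios appearing in \eqref{Ham-eqns1} are ill-defined, namely $\xi_3=0$ or $\xi_1=\xi_2$. I would handle this by returning to the unreduced form \eqref{Ham-eqns} (equivalently \eqref{hyperbolicLag-eqns}): there the nonlinear terms are the honest polynomials $2\mu_3^2-\tfrac12\eta_3^2$ and $-\mu_3(\mu_1-\mu_2)+\tfrac{\eta_3}{2}(\eta_1-\eta_2)$, and on the submanifold \eqref{3lin-rel} one has $\eta_3=\pm 2\xi_3$ (in particular $\eta_3=0$ whenever $\xi_3=0$) together with $\eta_1-\eta_2=\pm(\xi_1-\xi_2)$; substituting these directly shows the polynomials vanish with no division involved. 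So the cancellation is genuine on the whole submanifold, not merely on its generic part. Beyond this bookkeeping I expect no real obstacle: the content of the theorem is exactly that the \emph{same} two sign-correlated linear relations that the ZCR forces are the ones that annihilate the quadratic terms, and once the sign correlation from \eqref{lambda3-rel}--\eqref{lambda2-rel} is made explicit, the cancellation is a one-line substitution into \eqref{Ham-eqns1}.
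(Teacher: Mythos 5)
Your proposal is correct and follows essentially the same route as the paper: it derives the linear relations \eqref{3lin-rel} from the order-$\lambda^3$ and order-$\lambda^2$ determining equations \eqref{lambda3-rel}--\eqref{lambda2-rel} and then substitutes them into the reordered system \eqref{Ham-eqns1} to see that the quadratic right-hand sides vanish. Your explicit tracking of the sign correlation between the two relations and your check of the degenerate locus $\xi_3=0$ or $\xi_1=\xi_2$ via the polynomial form \eqref{Ham-eqns} are small refinements of detail that the paper leaves implicit, not a different argument.
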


%%%%%%%%%%%%%%%%%%%%%%%%%%%%%%%%%%
\section{Three-particle T-Strands}\label{3PT-Strand-sec}

To write the three-particle T-Strand equations on the semidirect product Lie algebra of the Lie group $S\circledS T$ in \eqref{Toda-sym-3part}, one may follow the same path as in section \ref{2PT-strand-sec} for the two-particle case. This means starting from the EP formulation of the three-particle Toda lattice ODEs as a basis for the EP formulation of the corresponding PDEs for three-particle T-Strands. One then inverts the Legendre transform to express the EP equations and their compatibility equations all in terms of the Lie algebraic variables, $\xi,\eta\in \mathfrak{s}\circledS\mathfrak{t}$. 

We write the three-particle T-Strand equations on the semidirect product Lie algebra in component form. First there are five EP equations determined from the five components of $({\rm ad}^*_\xi\eta)$ in \eqref{ad-star-3part},
\begin{align}
\begin{split}
&\partial_t\xi_1 - \partial_s\e_1 
= ({\rm ad}^*_{\xi}\e)_1
= \xi_1(\xi_4-\xi_3)  -  \e_1(\e_4-\e_3)\,,  \\
&\partial_t\xi_2 - \partial_s\e_2  
= ({\rm ad}^*_{\xi}\e)_2
=  \xi_2(\xi_5-\xi_4) - \e_2(\e_5-\e_4) \,, \\
& \partial_t\xi_3 -  \partial_s\e_3  
= ({\rm ad}^*_{\xi}\e)_3
= \frac{1}{2} (\xi_1^2 - \e_1^2) \,,\\
&\partial_t\xi_4 - \partial_s\e_4 
= ({\rm ad}^*_{\xi}\e)_4
= \frac{1}{2}(\xi_2^2-\xi_1^2)
- \frac{1}{2}(\e_2^2-\e_1^2)
\,, \\
&\partial_t\xi_5 - \partial_s\e_5  
= ({\rm ad}^*_{\xi}\e)_5
= -\,\frac{1}{2} (\xi_2^2 - \e_2^2)
\,.
\label{Ham-eqns-3P} 
\end{split}
\end{align}
Then there are five compatibility equations determined from the components of $({\rm ad}_\xi\eta)$ in \eqref{ad-commut-3part},
\begin{align}
\begin{split}
& \partial_t\e_1 - \partial_s\xi_1 
= -({\rm ad}_{\xi}\e)_1 
=  \xi_1(\eta_4-\eta_3) - \eta_1 (\xi_4-\xi_3)\,, \\
& \partial_t\e_2 - \partial_s\xi_2 = -({\rm ad}_{\xi}\e)_2
= \xi_2(\eta_5-\eta_4) - \eta_2(\xi_5-\xi_4) \,, \\
& \partial_t\e_3 - \partial_s\xi_3 = -({\rm ad}_{\xi}\e)_3 = 0 \,, \\
& \partial_t\e_4 - \partial_s\xi_4 = -({\rm ad}_{\xi}\e)_4 = 0 \,, \\
& \partial_t \e_5 - \partial_s\xi_5 = -({\rm ad}_{\xi}\e)_5 = 0 \,.  
\label{comp-eqns-3P} 
\end{split}
\end{align}

\subsection{Hyperbolicity of the three-particle T-Strand} 
\begin{theorem}\label{charactform}\rm
Equations \eqref{Ham-eqns-3P}--\eqref{comp-eqns-3P} for three-particle T-Strands may be written as a ten-dimensional symmetric hyperbolic system with characteristic speeds $c$ given by $c^2=1$.
\end{theorem}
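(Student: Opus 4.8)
The plan is to follow exactly the route used for the two-particle result in Section~\ref{2PT-strand-sec}: cast the system as $\partial_t U - A\,\partial_s U = F(U)$ with $A$ a constant symmetric matrix, and then read off the characteristic speeds from the spectrum of $A$. First I would collect the five EP equations \eqref{Ham-eqns-3P} and the five compatibility equations \eqref{comp-eqns-3P} into one vector equation for $U := (\x_1,\x_2,\x_3,\x_4,\x_5,\e_1,\e_2,\e_3,\e_4,\e_5)^T \in \mathbb{R}^{10}$. Every equation in \eqref{Ham-eqns-3P} has left-hand side $\partial_t\x_i - \partial_s\e_i$ and every equation in \eqref{comp-eqns-3P} has left-hand side $\partial_t\e_i - \partial_s\x_i$, so the principal part is the same $5\times5$ block swap in both halves and the system takes the form
\[
\partial_t U - A\,\partial_s U = F(U), \qquad A := \begin{pmatrix} 0_{5\times5} & I_5 \\ I_5 & 0_{5\times5} \end{pmatrix},
\]
where $F(U)$ is the degree-two homogeneous vector gathering the quadratic source terms on the right-hand sides of \eqref{Ham-eqns-3P}--\eqref{comp-eqns-3P}. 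Since $A$ is constant and symmetric (with the identity symmetrizer) and $F$ carries no derivatives, this is a symmetric hyperbolic system in the sense of Friedrichs.

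For the characteristic speeds I would compute $\det(cI_{10} - A)$. The four $5\times5$ blocks of $cI_{10}-A$ pairwise commute, so the block-determinant identity gives
\[
\det(cI_{10} - A) = \det\!\big((cI_5)(cI_5) - (-I_5)(-I_5)\big) = \det(c^2 I_5 - I_5) = (c^2 - 1)^5.
\]
Hence the roots are $c = \pm1$, each with multiplicity five, so the characteristic speeds satisfy $c^2 = 1$; in particular all speeds are real and $A$ is diagonalizable, which reconfirms hyperbolicity. This is the exact analogue of the factor $(c^2-1)^3$ appearing in the six-dimensional two-particle case, cf.\ the symmetric hyperbolic form \eqref{hyperbolicLag-eqns}.

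The one point deserving care — and the only real obstacle — is verifying that the inversion of the Legendre transform leading to \eqref{Ham-eqns-3P}--\eqref{comp-eqns-3P} has been performed so that the $\partial_s$ coupling is the \emph{exact} identity block rather than merely a general diagonal block. In the two-particle case this held because the single nontrivial Hessian rescaling $\mu_3 = \tfrac12\x_3$ entered symmetrically and was cleared by multiplying the relevant EP and compatibility equations by $2$; for the three-particle case one must check that the analogous diagonal rescalings $(\mu_1,\dots,\mu_5) = (\tfrac12\x_1,\tfrac12\x_2,\x_3,\x_4,\x_5)$ coming from the three-particle Toda Hessian act consistently on both blocks, so that no asymmetric factor survives in front of $\partial_s$. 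Granting \eqref{Ham-eqns-3P}--\eqref{comp-eqns-3P} in the displayed form, this is immediate, and the theorem then follows from the one-line spectral computation above.
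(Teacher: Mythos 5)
Your proposal is correct and follows essentially the same route as the paper: both cast \eqref{Ham-eqns-3P}--\eqref{comp-eqns-3P} as $\partial_t U - A\,\partial_s U = RHS$ with the constant symmetric block-swap matrix $A$ and read off the characteristic polynomial $(c^2-1)^5=0$, hence $c^2=1$. Your extra check on the Legendre-transform rescalings and the explicit block-determinant computation are fine additions but not a different argument.
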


\begin{proof}
The three-particle T-Strand equations \eqref{Ham-eqns-3P}--\eqref{comp-eqns-3P} may be rearranged into the following symmetric hyperbolic form,
\begin{equation}
\frac{\partial}{\partial t} 
\begin{bmatrix} \boldsymbol{\xi} \\  \boldsymbol{\eta} \end{bmatrix} 
-
\begin{bmatrix} 
\mathbf{0}_{5\times5} & \mathbf{Id}_{5\times5}
\\
\mathbf{Id}_{5\times5} & \mathbf{0}_{5\times5}
\end{bmatrix} 
\frac{\partial}{\partial s} 
\begin{bmatrix} \boldsymbol{\xi} \\  \boldsymbol{\eta} \end{bmatrix} 
= RHS
\end{equation}
where $RHS$ is the right-hand side of equations \eqref{Ham-eqns-3P}--\eqref{comp-eqns-3P}. 
The characteristic polynomial of this system has roots given by $(c^2-1)^5$=0.
Thus, the characteristic speeds $c$ are given by $c^2=1$.
\end{proof}

\subsection{ZCR test for integrability of the three-particle T-Strand} 
\begin{proposition} [Lax pair for Euler--Poincar\'e] \rm
There exists a ZCR for the three-particle T-Strand Euler--Poincar\'e equation \eqref{Ham-eqns-3P}, namely
\begin{equation*}
\partial_t M_1 -\partial_s N_1 + \left[M_2,M_1\right] - \left[N_2,N_1\right] = 0
\end{equation*}
where
\begin{align*}
M_1
=
\begin{pmatrix}
 \xi_3 & \xi_1/2 & 0 
\\
\xi_1/2 & \xi_4 &  \xi_2/2
\\
0 & \xi_2/2 & \xi_5
\end{pmatrix}
,&\qquad
M_2
=
\begin{pmatrix}
 \xi_3 & 0 & 0 
\\
\xi_1 & \xi_4 & 0
\\
0 & \xi_2 & \xi_5
\end{pmatrix}
\\
N_1
=
\begin{pmatrix}
 \eta_3 & \eta_1/2 & 0 
\\
\eta_1/2 & \eta_4 &  \eta_2/2
\\
0 & \eta_2/2 & \eta_5
\end{pmatrix}
,&\qquad
N_2
=
\begin{pmatrix}
 \eta_3 & 0 & 0 
\\
\eta_1 & \eta_4 & 0
\\
0 & \eta_2 & \eta_5
\end{pmatrix}
\end{align*} 
\end{proposition}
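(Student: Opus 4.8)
The plan is to verify the claimed zero-curvature representation for the three-particle Euler--Poincar\'e system \eqref{Ham-eqns-3P} directly, by substituting the given $3\times3$ matrices $M_1,M_2,N_1,N_2$ into the relation $\partial_t M_1 - \partial_s N_1 + [M_2,M_1] - [N_2,N_1] = 0$ and matching entry by entry with the five EP equations. This is the same strategy used for the two-particle case: the symmetric tridiagonal matrix $M_1$ carries the $\xi$-variables exactly as in Flaschka's $L$ matrix \eqref{LB-defs} (with the $a_k$ replaced by $\xi_k/2$ and the $b_k$ by $\xi_{k+2}$), while $M_2$ is the associated lower-triangular matrix analogous to $L_+ - L_-$ plus the diagonal, so that $M_2 - M_1$ is the antisymmetric tridiagonal matrix that reproduces the classical Toda Lax matrix $M$. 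The parallel structure in $N_1,N_2$ for the $\eta$-variables then produces the second copy of the Lax pair.

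First I would compute the commutator $[M_2,M_1]$ explicitly. Since $M_2 = D + L_0$ with $D$ diagonal and $L_0$ strictly lower triangular, and $M_1 = D + \tfrac12(L_0' + L_0'^T)$ in an obvious notation, one can organize the computation as $[M_2,M_1] = [D + L_0,\,M_1]$; the diagonal part $[D,M_1]$ contributes the off-diagonal entries proportional to $\xi_1(\xi_3-\xi_4)$ and $\xi_2(\xi_4-\xi_5)$, while $[L_0,M_1]$ contributes to the diagonal the terms $\tfrac12\xi_1^2$, $\tfrac12(\xi_2^2-\xi_1^2)$, $-\tfrac12\xi_2^2$ that appear on the right-hand sides of the third, fourth and fifth equations of \eqref{Ham-eqns-3P}. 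Then $[N_2,N_1]$ is the same computation with $\eta$ in place of $\xi$. Adding $\partial_t M_1 - \partial_s N_1$, which is the diagonal/tridiagonal matrix of entries $\partial_t\xi_3 - \partial_s\eta_3$, $\tfrac12(\partial_t\xi_1 - \partial_s\eta_1)$, etc., and comparing the $(1,1),(2,2),(3,3),(1,2),(2,3)$ entries against \eqref{Ham-eqns-3P} should give exact agreement; the $(1,3)$ and $(3,1)$ entries must vanish identically, which they do because $M_1$ and $M_2$ are tridiagonal and their commutator stays pentadiagonal but the outermost band cancels (this is the same mechanism that keeps Flaschka's $[M,L]$ tridiagonal).

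The main obstacle, such as it is, is bookkeeping: one must be careful with the factors of $\tfrac12$ in the off-diagonal entries of $M_1$ versus the factor $1$ in $M_2$, since these are precisely what is needed to make the coefficients $\xi_1(\xi_4-\xi_3)$ come out with the right normalization to match the first EP equation (rather than, say, with an extra $\tfrac12$), and likewise the $\tfrac12\xi_1^2$ terms on the diagonal. It is also worth checking consistency with the boundary conventions $\xi_0 = \xi_3 = 0$ at the ends (analogous to $a_0 = a_n = 0$) so that the $(1,1)$ and $(3,3)$ diagonal entries get only one quadratic contribution each. No genuine difficulty arises beyond this; the proof is a direct matrix verification patterned exactly on the two-particle case of Theorem \ref{ZCR-thm}, and I would present it by displaying $[M_2,M_1]-[N_2,N_1]$ as a single $3\times3$ matrix and reading off the five independent entries.
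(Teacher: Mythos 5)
Your proposal is correct and takes essentially the same route as the paper, whose proof is exactly the direct entry-by-entry verification you describe: $[M_2,M_1]$ produces the off-diagonal terms $\tfrac{\xi_1}{2}(\xi_3-\xi_4)$, $\tfrac{\xi_2}{2}(\xi_4-\xi_5)$ and the diagonal terms $-\tfrac12\xi_1^2$, $\tfrac12(\xi_1^2-\xi_2^2)$, $\tfrac12\xi_2^2$, the corner entries cancel, and the matrix identity reproduces precisely the five equations of \eqref{Ham-eqns-3P}. One harmless notational slip: the single quadratic contribution to the $(1,1)$ and $(3,3)$ entries reflects the absence of a third off-diagonal variable (Flaschka's $a_0=a_3=0$), not a condition ``$\xi_3=0$'', since $\xi_3$ is a diagonal ($b$-type) variable here.
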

\begin{proof}
The proof is a direct calculation. 
\end{proof}

\begin{proposition} [ZCR for compatibility condition]\rm 
It follows from the above definitions that these matrices satisfy the compatibility condition for the three-particle Toda Lattice G-Strands
\begin{equation*} 
\partial_t N_2 - \partial_s M_2 + \left[M_2,N_2\right] = 0 
\,,
\end{equation*}
although only on the submanifold $\eta_2\x_1=\x_2\eta_1$.
\end{proposition}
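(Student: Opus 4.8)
The plan is to verify the claimed identity
\[
\partial_t N_2 - \partial_s M_2 + [M_2,N_2] = 0
\]
directly by computing each of the three terms as an explicit $3\times3$ matrix in the variables $\xi_i,\eta_i$ and their $(s,t)$-derivatives, then substituting the prognostic evolution equations \eqref{comp-eqns-3P} for $\partial_t\eta_i$ and $\partial_s\xi_i$. First I would write out the lower-triangular matrices: since $N_2$ and $M_2$ have entries $(\eta_3,\eta_4,\eta_5)$ and $(\xi_3,\xi_4,\xi_5)$ on the diagonal and $(\eta_1,\eta_2)$, $(\xi_1,\xi_2)$ on the sub-diagonal, the derivatives $\partial_t N_2$ and $\partial_s M_2$ are immediate, and the commutator $[M_2,N_2]$ is a short computation for lower-triangular matrices — the diagonal of a commutator of two lower-triangular matrices vanishes, so only the sub-diagonal and the $(3,1)$ entry can be nonzero. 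In fact for this banded structure the $(3,1)$ entry of $[M_2,N_2]$ also vanishes, leaving only the $(2,1)$ and $(3,2)$ positions to check.

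Next I would read off the equation position by position. The $(1,1),(2,2),(3,3)$ diagonal entries give $\partial_t\eta_3-\partial_s\xi_3=0$, $\partial_t\eta_4-\partial_s\xi_4=0$, $\partial_t\eta_5-\partial_s\xi_5=0$, which are exactly the last three compatibility equations in \eqref{comp-eqns-3P} and hold identically. The $(2,1)$ entry gives $\partial_t\eta_1-\partial_s\xi_1+[M_2,N_2]_{21}=0$; computing $[M_2,N_2]_{21}=\xi_1\eta_3+\xi_4\eta_1-\eta_1\xi_3-\eta_4\xi_1 = \eta_1(\xi_4-\xi_3)-\xi_1(\eta_4-\eta_3)$, we see this reproduces precisely the first compatibility equation in \eqref{comp-eqns-3P}, which holds unconditionally. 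Likewise the $(3,2)$ entry yields $\partial_t\eta_2-\partial_s\xi_2 = \xi_2(\eta_5-\eta_4)-\eta_2(\xi_5-\xi_4)$, matching the second compatibility equation. So far every position is satisfied with no extra restriction.

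The obstruction — and the reason the submanifold condition appears — is the $(3,1)$ entry. Although $[M_2,N_2]_{31}=0$ for strictly banded lower-triangular matrices, the off-band structure means there is no $(3,1)$ compatibility equation to absorb a nonzero term; more precisely, when one substitutes the evolution equations and demands the matrix identity to hold as a genuine equality of $3\times3$ matrices (rather than only on the band), the cross-terms coming from the product $M_2 N_2$ versus $N_2 M_2$ in the $(3,1)$ slot produce the residual $\xi_2\eta_1 - \eta_2\xi_1$ (up to an overall sign/factor), which must vanish. This is exactly the stated condition $\eta_2\xi_1 = \xi_2\eta_1$. I expect this $(3,1)$-entry bookkeeping to be the main subtlety: one has to be careful that $M_2 N_2$ and $N_2 M_2$ each have a nonzero $(3,1)$ entry ($\xi_2\eta_1$ and $\eta_2\xi_1$ respectively, from the path $3\to2\to1$) even though their difference is what enters, so the commutator's $(3,1)$ entry is $\xi_2\eta_1-\eta_2\xi_1$ rather than identically zero. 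Thus the identity closes iff that difference vanishes, i.e. on the submanifold $\eta_2\xi_1=\xi_2\eta_1$, which is what we set out to prove.
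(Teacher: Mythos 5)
Your proposal is correct and is precisely the ``direct calculation'' that the paper invokes without displaying: the diagonal and sub-diagonal entries of $\partial_t N_2-\partial_s M_2+[M_2,N_2]$ reproduce the five compatibility equations \eqref{comp-eqns-3P}, while the $(3,1)$ entry of the commutator, $\xi_2\eta_1-\eta_2\xi_1$, has no evolution equation to balance it and forces the submanifold condition $\eta_2\xi_1=\xi_2\eta_1$. The only blemish is internal: your first paragraph asserts that the $(3,1)$ entry of $[M_2,N_2]$ vanishes for this banded structure, which contradicts (and is corrected by) your own later, correct computation that it equals $\xi_2\eta_1-\eta_2\xi_1$; strike that sentence and the argument is clean.
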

\begin{proof}
The proof is a direct calculation. 
\end{proof}

Keeping in mind the caveat that the compatibility condition for three-particle T-Strand holds only on the submanifold $\eta_2\x_1=\x_2\eta_1$, we continue with the calculation of the ZCR condition. 

\begin{theorem}[Cancellation of the nonlinearity under the ZCR test for integrability]\rm$\,$\\
The determining equations in \eqref{3lin-rel} provide the submanifold on which the three-particle T-Strand PDE system \eqref{Ham-eqns1} in 1+1 dimensions admits a ZCR. These (linear) determining equations imply that the (quadratic) nonlinear terms \emph{vanish} in the T-Strand PDEs.
\end{theorem}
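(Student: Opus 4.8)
The plan is to repeat, with the $3\times3$ matrices $(M_1,M_2,N_1,N_2)$ displayed above, the three-stage argument carried out for the two-particle case in Theorem~\ref{ZCR-thm} and in the subsequent ZCR test. First I would form the spectral Lax operators $L:=\lambda^2 A + \lambda M_1 - N_2$ and $M:=\lambda^2 B - \lambda N_1 + M_2$ with constant diagonal matrices $A={\rm diag}(a_1,a_2,a_3)$ and $B={\rm diag}(b_1,b_2,b_3)$, substitute into $\partial_t L - \partial_s M = [L,M]$, and collect coefficients of $\lambda^4,\lambda^3,\lambda^2,\lambda^1,\lambda^0$, exactly as in \eqref{constraint-rels}. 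As there, the $\lambda^4$ relation $[A,B]=0$ holds automatically for diagonal $A,B$; the $\lambda^1$ and $\lambda^0$ relations reproduce the Euler--Poincar\'e equations \eqref{Ham-eqns-3P} and the compatibility equations \eqref{comp-eqns-3P}, invoking the two propositions above (and carrying along the caveat $\eta_2\xi_1=\xi_2\eta_1$ needed for the compatibility Lax pair). The content of the theorem therefore sits in the orders $\lambda^3$ and $\lambda^2$.

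At order $\lambda^3$ the relation $[A,N_1]+[B,M_1]=0$ is, entrywise with $A,B$ diagonal and $M_1,N_1$ symmetric tridiagonal, a pair of linear relations: matching the $(1,2)$ and $(2,3)$ entries gives $\eta_1=-\tfrac{b_1-b_2}{a_1-a_2}\,\xi_1$ and $\eta_2=-\tfrac{b_2-b_3}{a_2-a_3}\,\xi_2$, the direct analogues of \eqref{lambda3-rel}, while the corner $(1,3)$ entry is vacuous. At order $\lambda^2$ the relation $[A,M_2]+[B,N_2]=[M_1,N_1]$ must be examined entry by entry. The commutator of two symmetric tridiagonal matrices is pentadiagonal; its $(1,3)$ and $(3,1)$ entries are proportional to $\xi_1\eta_2-\xi_2\eta_1$, which must vanish since $[A,M_2]+[B,N_2]$ has no corner entries ($M_2,N_2$ being lower bidiagonal) --- this is precisely the submanifold condition $\eta_2\xi_1=\xi_2\eta_1$. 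Matching the super-diagonal entries then forces $\xi_1(\eta_3-\eta_4)=\eta_1(\xi_3-\xi_4)$ and $\xi_2(\eta_4-\eta_5)=\eta_2(\xi_4-\xi_5)$, matching the sub-diagonal entries forces the reciprocal relations $\eta_1=-\tfrac{a_1-a_2}{b_1-b_2}\,\xi_1$ and $\eta_2=-\tfrac{a_2-a_3}{b_2-b_3}\,\xi_2$, and the diagonal entries are satisfied identically. Combining the $\lambda^2$ and $\lambda^3$ relations forces $\bigl(\tfrac{a_1-a_2}{b_1-b_2}\bigr)^2=1$ and $\bigl(\tfrac{a_2-a_3}{b_2-b_3}\bigr)^2=1$, and hence for nonzero $\xi_1,\xi_2$ the linear determining equations $\eta_1=\pm\xi_1$, $\eta_2=\pm\xi_2$, $\eta_3-\eta_4=\pm(\xi_3-\xi_4)$, $\eta_4-\eta_5=\pm(\xi_4-\xi_5)$ --- the three-particle analogue of \eqref{3lin-rel}.

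The final step is to feed these linear relations into the right-hand sides of \eqref{Ham-eqns-3P}--\eqref{comp-eqns-3P}. Every right-hand side is either a difference of a $\xi$-quadratic and the matching $\eta$-quadratic (such as $\xi_1(\xi_4-\xi_3)-\eta_1(\eta_4-\eta_3)$ or $\tfrac12(\xi_1^2-\eta_1^2)$) or a ``cross'' bilinear term (such as $\xi_1(\eta_4-\eta_3)-\eta_1(\xi_4-\xi_3)$); substituting $\eta_1=\pm\xi_1$, $\eta_4-\eta_3=\pm(\xi_4-\xi_3)$, and the companion relations collapses each of these to zero, so the system becomes linear --- indeed a constant-coefficient wave system --- on the ZCR submanifold, which is the assertion of the theorem.

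The step I expect to be the main obstacle is the order-$\lambda^2$ bookkeeping. For $2\times2$ matrices the commutator $[M_1,N_1]$ of two symmetric matrices is automatically purely off-diagonal, so no structural compatibility question arises; for $3\times3$ symmetric tridiagonal matrices the commutator picks up corner entries, and one has to verify that the off-diagonal structure of $[A,M_2]+[B,N_2]$ (diagonal $A,B$ acting on lower-bidiagonal $M_2,N_2$) can actually absorb $[M_1,N_1]$. This is exactly where the extra constraint $\eta_2\xi_1=\xi_2\eta_1$ is forced, and it must be threaded consistently through all the sub- and super-diagonal relations; once that is done, the cancellation of the nonlinear right-hand sides is a short verification identical in spirit to the two-particle case.
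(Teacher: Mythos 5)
Your proposal follows essentially the same route as the paper's own proof: the same spectral ansatz $L=\lambda^2A+\lambda M_1-N_2$, $M=\lambda^2B-\lambda N_1+M_2$ with diagonal $A,B$, the same expansion in powers of $\lambda$ yielding the determining relations at orders $\lambda^3$ and $\lambda^2$ (including the corner-entry condition $\eta_2\xi_1=\xi_2\eta_1$ and the reciprocal ratio relations forcing $\bigl(\tfrac{a_1-a_2}{b_1-b_2}\bigr)^2=1$, etc.), and the same final substitution of the resulting linear relations into \eqref{Ham-eqns-3P} to cancel all quadratic terms. Your entrywise bookkeeping at order $\lambda^2$ simply makes explicit the ``computation of matrix algebra'' the paper leaves implicit, and it is correct.
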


\begin{proof} 
We define quadratic quantities in $\lambda$
\begin{equation*} 
L:=\lambda^2 A + \lambda M_1 - N_2 \qquad  M:=\lambda^2 B - \lambda N_1 + M_2 
\end{equation*}
The Lax pair for a system of PDEs is defined as:
\begin{equation} 
\partial_t L - \partial_s M = [L,M] 
\end{equation}
Inserting our definitions for L and M, and equating for powers of $\lambda$ yields:
\begin{align}
\begin{split}
\lambda^4 &: [A,B]=0 \\
\lambda^3 &: [A,N_1]+[B,M_1]=0 \\
\lambda^2 &: [A,M_2]+[N_2,B]-[M_1,N_1]=0 \\
\lambda^1 &: \partial_t M_1 -\partial_s N_1 + \left[M_2,M_1\right] - \left[N_2,N_1\right] = 0 \\
\lambda^0 &:  \partial_t N_2 - \partial_s M_2 + \left[M_2,N_2\right] = 0
\end{split}
\end{align}

We now use the proof of Theorem \ref{ZCR-thm} in order to derive the relations under which the system of equations (\ref{Ham-eqns}) is integrable.
\vspace{5mm} \\
The $\lambda^4$ equation is satisfied by choosing $A=diag(a_1,a_2,a_3)$ and $B:=diag(b_1,b_2,b_3)$.
\vspace{5mm} \\
For our system, a computation of matrix algebra yields that the order $\lambda^3$ equation is satisfied provided the following relations hold
\begin{align}
\begin{split}
-\eta_1&= \frac{b_1-b_2}{a_1-a_2}\,\x_1 \,, \\
-\eta_2&=\frac{b_2-b_3}{a_2-a_3}\,\x_2 \,, 
\label{lambda3-rel1}
\end{split}
\end{align}
We similarly find that the order $\lambda^2$ equation is satisfied provided the following relations hold 
\begin{align}
\begin{split}
\eta_1(\x_4-\x_3)&=\x_1(\eta_4-\eta_3)
\,. \\
\eta_2\x_1&=\x_2\eta_1
\,, \\
-\eta_1&=\frac{a_1-a_2}{b_1-b_2}\x_1
\,, \\
\eta_2(\x_5-\x_4)&=\x_2(\eta_5-\eta_4)
\,, \\
-\eta_2&=\frac{a_2-a_3}{b_2-b_3}\,\x_2  
 \,.
\label{lambda2-rel1}
\end{split}
\end{align}

By comparing the expressions obtained from solving for the ratios $\x_1/\eta_1$ and $\x_2/\eta_2$ in \eqref{lambda3-rel1} and \eqref{lambda2-rel1}, we see that the following relations must hold
\begin{equation} \frac{b_1-b_2}{a_1-a_2} =
\frac{b_2-b_3}{a_2-a_3}  = \pm 1 \end{equation}
Hence, from the other equations in \eqref{lambda2-rel}, we also have 
\begin{equation} 
\frac{\xi_1}{\eta_1}  = \frac{\xi_4-\xi_3}{\eta_4-\eta_3} = \mp 1\,,
\qquad
\frac{\xi_2}{\eta_2} = \frac{\xi_5-\xi_4}{\eta_5-\eta_4} = \mp 1
\,.
\end{equation}

Upon substituting the ZCR relations \eqref{lambda2-rel1} into the three-particle T-Strand equations \eqref{Ham-eqns-3P}, all of their nonlinearities cancel to zero. That is, only the linear part of the three-particle T-Strand system of equations \eqref{Ham-eqns-3P} survives after imposing the ZCR test for integrability. 
\end{proof}

%%%%%%%%%%%%%%%%%%%%%%%%%%%%%%%%%%
\section{Conclusions}\label{conclude-sec}
This paper has formulated Toda lattice dynamics in terms of Hamilton's principle by using the Euler-Poincar\'e (EP) theory for Lie group invariant Lagrangians \cite{Po1901}. This EP formulation of Toda lattice dynamics has allowed it to be placed into the G-Strand framework, which generalized it from ODEs in time $t$ to T-Strand PDEs in 1+1 space-time $(s,t)$. The travelling wave solutions for this two-particle T-Strand system of nonlinear PDEs were studied and represented geometrically, as paths in $\mathbb{R}^3$ along intersections between two families of orthogonally aligned but off set elliptical cylinders. The two-particle T-Strand PDEs turned out to comprise a six-dimensional symmetric hyperbolic system with constant characteristic speeds. This, in turn, was a necessary condition for them to be endowed with a zero curvature representation (ZCR). The ZCR was computed, but for the two-particle T-Strand the determining conditions for the existence of the ZCR exactly cancelled the nonlinearity in the PDE system, resulting in a set of six uncoupled linear wave equations.  This was a bit surprising, since the strategy of extending the Lax pair for an integrable ODE system to an an integrable 1+1 PDE system had been successful in other cases \cite{HoIvPe2012}. One might have thought that the cancellation of nonlinearities had happened for the two-particle T-Strand because it did not have enough ``room'' for the interplay among it degrees of freedom to admit the additional determining conditions. However, the same phenomenon occurred for the three-particle T-Strand PDEs, in which the nonlinear terms were again exactly cancelled by the ZCR determining conditions.  

We conclude by stating the following open question  raised by this investigation and suggesting a possible route to its solution. Namely, when does the Euler-Poincar\'e variational strategy of the present paper fail or succeed in producing an integrable 1+1 PDE system from an integrable ODE system? As mentioned earlier, this approach had succeeded in formulating  integrable G-Strand PDEs by extending Euler's equations for both the rigid body ODEs on $SO(3)$ and the Bloch-Iserles ODEs on $Sp(2)$, both of which have Lax pairs \cite{HoIvPe2012}.  As we have found, the EP approach does produce interesting PDE systems for the two-particle and three-particle T-Strands. However, in the T-Strand case, the EP approach does not succeed in producing nonlinear systems that admit a ZCR and are thus completely integrable Hamiltonian systems. The reason for this may have been spotted in remark \ref{ad-dagger-rem} where we observed that on the Lie algebra $\mathfrak{s}\circledS\mathfrak{t}$ the quantity $({\rm ad}^*_\xi\mu)^T\in \mathfrak{g}$ \emph{cannot} be written as a matrix commutator with the matrix pairing, since 
\[
({\rm ad}^*_\xi\mu)^T=:{\rm ad}^\dagger_\xi(\mu^T)\ne -\,{\rm ad}_\xi(\mu^T)
\,.
\]
This means that for this Lie algebra and this pairing the operations ad and ad$^*$ are not both equivalent to matrix commutators. 
In contrast, one has ${\rm ad}^\dagger_\xi(\mu^T) = -\,{\rm ad}_\xi(\mu^T)$ for both of the Lie algebras $\mathfrak{so}(3)$ and $\mathfrak{sp}(2)$, whose corresponding EP equations have been shown to be completely integrable via the EP approach.

%%%%%%%%%%%%%%%%%%%%%%%%%%%%%%%%%%%
\subsection*{Acknowledgements}
We are grateful to A. M. Bloch, J. Carrillo de la Plata, L. Garcia, F. Gay-Balmaz, R. I. Ivanov, H. O. Jacobs, T. S. Ratiu, C. Tronci and J. Vankerschaver for timely discussions and suggestions. The work by DDH was partially supported by Advanced Grant 267382 FCCA from the European Research Council.

\end{document}